\numberwithin{equation}{section} \numberwithin{figure}{section}
\theoremstyle{remark} \newtheorem{remark}{Remark}}
\newtheorem{definition}{Definition}[section]
\newtheorem{theorem}{Theorem}[section]
\newtheorem{lemma}{Lemma}[section]
\newcommand{\fran}[1]{{\color{blue}#1}}
\newcommand{\eps}{\varepsilon}
\newcommand{\Pt}{\tilde{P}}
\newcommand{\etat}{\tilde{\eta}}
\newcommand{\calN}{\mathcal{N}}
\newcommand{\calP}{\mathcal{P}}
\newcommand{\calA}{\mathcal{A}}
\newcommand{\calB}{\mathcal{B}}
\newcommand{\N}{\mathbb{N}}
\newcommand{\R}{\mathbb{R}}
\DeclareMathOperator*{\argmax}{arg\,max}
\title[Generalizing Liquid Democracy to multi-agent delegation]{Generalizing Liquid Democracy to multi-agent delegation: A Voting Weight Measure and Equilibrium Analysis}
\author[F.M.~Bersetche]{Francisco M.~Bersetche}
\address[F.M.~Bersetche]{Departamento de Matem\'atica, FCEyN, Universidad de Buenos Aires, Ciudad Universitaria 1428, Buenos Aires, Argentina.}
\address[F.M.~Bersetche]{Departamento de Matem\'atica, Universidad T\'ecnica Federico Santa Mar\'ia, Av. España
1680, Valpara\'iso, Chile.}
\email{bersetche@gmail.com}
\thanks{ The author thanks the anonymous reviewers for their careful reading of the manuscript and for their constructive observations, which have helped to clarify and improve the work. The author was partially supported by ANID through FONDECYT Project 3220254. Part of this work was performed in Centro At\'omico Bariloche under a postdoctoral fellowship funded by CONICET}
\begin{document}

\begin{abstract}

   In this study, we propose a generalization of the classic model of liquid democracy that allows fractional delegation of voting weight, while simultaneously allowing for the existence of equilibrium states. Our approach empowers agents to partition and delegate their votes to multiple representatives, all while retaining a fraction of the voting weight for themselves. We introduce a penalty mechanism for the length of delegation chains. We discuss the desirable properties of a reasonable generalization of the classic model, and prove that smaller penalty factors bring the model closer to satisfying these properties. In the subsequent section, we explore the presence of equilibrium states in a general delegation game utilizing the proposed voting measure. In contrast to the classical model, we demonstrate that this game exhibits Nash equilibria, contingent upon the imposition of a penalty on the length of delegation chains.

\end{abstract}

\maketitle

\section{Introduction}\label{sec:intro}

Liquid democracy (LD) is a collective decision making method that lies between representative democracy and direct democracy. In LD, an agent can choose to either vote directly or to select another agent as their proxy. Unlike classic proxy voting, the delegation is transitive; this means that the proxy may delegate their voting weight further, creating delegation paths along which the voting weight flows and accumulates.

 Over the past decade, several practical implementations of LD have been conducted \cite{10years}. Notable examples include Zupa \cite{zupa}, which was employed by the Student Union of the Faculty of Information Studies in Novo mesto, Slovenia; LiquidFeedback \cite{LF}, which was used by the German Pirate Party; and Sovereign \cite{siri}[Section 2], a blockchain-based platform developed and utilized by the Democracy Earth Foundation in Argentina. Of these implementations, the use of LiquidFeedback by the German Pirate Party is the best documented in the literature and will be considered as a benchmark for LD practical applications throughout this work. 
  
Proxy voting models have been studied since Carroll's work in 1884 \cite{1884}. In the last decade, these models have been investigated by the political science \cite{green} and artificial intelligence communities \cite{brandt,chris_grossi}. Subsequent research has focused on potential improvements and modifications, as well as possible weaknesses of these models \cite{potentials_prob,brill_talmon,critique,indefense,procaccia}, among others. For a comprehensive review of this topic see \cite{10years} and the references therein.

Recently, there has been an increasing interest in examining the existence of equilibrium states in these systems from a game theory perspective \cite{grossi,game}. Particularly, in \cite{game} the authors establish that the classical model of liquid democracy generally lacks equilibrium states. To address this problem, one may consider a model that allows delegation to multiple agents through the use of mixed strategies. However, while this model ensures the existence of equilibrium states, it becomes evident that it does not preserve desirable properties for a liquid democracy model.  

The aim of this work is to provide a model of liquid democracy that ensures the existence of equilibrium states while simultaneously preserving desirable properties in a liquid democracy system. The proposed model also generalizes the single delegation paradigm by allowing delegation to multiple agents.

\subsection*{Contribution}

The prevalent approach in liquid democracy implementations and existing literature typically adopts a simple delegation paradigm, where each agent has the option to delegate their entire voting weight to a single agent or cast a direct vote \footnote{ Sovereign \cite{siri} introduces a multi-agent delegation system, granting users a finite number of independently delegatable votes; however, each received vote must be independently dispensed by each proxy.}. This paper generalizes this model by allowing agents to divide their voting weight arbitrarily and delegate it to multiple agents, while also retaining part of it for themselves, and enabling the existence of equilibrium states. This generalization is achieved through a novel voting weight measure that preserves the main properties of classic delegation. The introduced measure incorporates a penalty factor based on the length of delegation chains, allowing the existence of equilibrium states. As in most platforms and available works, we consider the voting weight trapped in delegation cycles as null votes (see \cite{mythordisaster}).

Throughout this work, we posit that a reasonable generalization of the classical LD model should fulfill, at least, the following properties:

\begin{itemize}

    \item[(P1)] It generalizes the concept of simple delegation, whereby classic delegation is recovered when agents choose to delegate their voting weight without dividing it.  

    \item[(P2)] Delegation makes sense. If agent $a$ delegates her voting weight to agent $b$, the distribution of voting weight is equivalent to what would be obtained if agent $a$ replicated the delegation weighting of agent $b$. This implies that delegating to an agent is equivalent to replicating their actions. Although we are being advisedly vague here, this property is formalized in Definition \ref{def:P2}. 
    
    \item[(P3)] If an agent chooses to retain a fraction $q \in (0,1]$ of the vote for herself, then her voting weight should be no less than $q$. Conversely, if an agent opts to delegate all the weight, her voting weight should be rendered null. This property is referred to as \textit{self-selection}.

\end{itemize}
Note that property (P2) implies that, for an agent, delegating is equivalent to allowing their proxy to make decisions on their behalf. That is, starting from a delegation setting where agent $a$ delegates their vote to agent $b$, if a new setting is constructed where agent $a$ copies the delegation weightings of agent $b$, the resulting voting weight should remain unchanged. This intuitively corresponds with the expected outcome of delegation. We refer to (P2) as the \textit{delegation property}. We emphasize that the properties (P1)-(P3) are desirable, but do not necessarily provide a comprehensive characterization of a reasonable generalization of the classic LD model. In this work we establish the fulfillment of properties (P1)-(P3) by the proposed voting measure within a defined margin of error determined by the penalty factor.  

The evolution of the decision making process is a crucial factor to consider in our analysis. As reported in \cite{temporal}, the experience with LiquidFeedback reveals that the behavior of agents changes over time due to their observations of other agents' actions, which is essential for the development of decision-making processes. Therefore, we assume that an agent's delegation weighting evolves over time and depends on real-time feedback received by the agent. This leads us to ask whether this process can reach an equilibrium state, then in the second part of this work, we introduce a game-theoretic model where agents serve as players and their delegation weightings represent their strategies. Each agent $i$ is endowed with a utility function that measures the amount of their own vote attributed to a specific player $j$, weighted by a satisfaction factor which accounts for $i$'s contentment with the proportion of their voting weight ceded to $j$. We demonstrate the existence of Nash equilibria in this game under the condition of non-zero penalties imposed on extended delegation chains. Furthermore, we present the main result of this work, where we establish the possibility of defining a voting weight measure such that the fundamental properties of classic delegation are upheld within an acceptable margin of error, while simultaneously attaining Nash equilibria for the examined game. 
To the best of our knowledge, this study offers the first formal proof establishing the feasibility of a liquid democracy system that guarantees the existence of equilibrium states.

\subsection*{Related work} 
Regarding models that allow multi-agent delegation, we can mention the model proposed in \cite{degrave}, which is similar to the model presented here when the penalty factor is not considered. Spectral type centrality measures have been proposed for measuring voting weight; \cite{yoshida} explores the use of PageRank as a voting measure, while \cite{viscousDemocracy} propose another voting measure based on a PageRank-type centrality which seeks to penalize long delegation paths. Although simple delegation is used here for theoretical analysis purposes, these techniques can easily be generalized to multi-agent delegation.
On the other hand, following a different approach, in \cite{ugrandi}, the authors propose and analyze a model based on multi-agent ranked delegation, in which agents provide a ranking over multiple delegations to be used as backups when solving possible delegation cycles. Similarly, \cite{brill} and \cite{brill_talmon} study the delegation of vote through a ranking of preferences, while \cite{fluid} proposes a technique based on fluid dynamics that attempts to minimize concentration of votes by choosing an appropriate delegation graph according to the agents' preferences.

In Section \ref{sec:nash}, we define a game in which the delegation weighting of each agent is considered as a strategy, and a utility function based on the preferences of each agent over the others. This type of game has been first studied in \cite{grossi}, with regard to simple delegation. Particularly relevant to our work is \cite{game}, where the authors analyze the existence of equilibrium states in this type of game under various optimality criteria. The problem discussed in Section \ref{sec:nash} can be viewed as a generalization of the \textbf{EXISTENCE} problem \footnote{ This problem focuses on establishing the existence of a Nash equilibrium in a game where players' strategies are determined by their voting behavior, and their utility functions are governed by their individual preferences (see Section \ref{sec:EXISTENCE}). } defined in \cite[Section 3.2]{game}. In \cite{zhang_grossi2}, the authors utilize a weighted delegation scheme based on mixed strategies to examine the truth-tracking problem. This scheme is analogous to the one explored in Section \ref{subsec:apMS}. Other works addressing similar topics include \cite{pirates,zhang_grossi}.

\subsection*{Organization of the paper} The rest of the paper is organized as follows. Section \ref{sec:preli} provides an intuitive description of the proposed voting weight measure, beginning with its probabilistic interpretation. These ideas are formalized in Section \ref{sec:formal}, where we define the measure and prove its main properties. In Section \ref{sec:nash}, game theory is employed to investigate the existence of equilibrium states in the context of agents' ability to modify their delegation weighting according to system outcomes. Ultimately, our main result is demonstrated, which establishes the feasibility of constructing a voting weight measure through imposition of penalties on long delegation chains, that both preserves classic delegation properties and allows for Nash equilibria. 

\section{Preliminaries and main ideas}\label{sec:preli}

\subsection{Preliminaries}\label{sec:subpreli}
 
Let $N = \{1,...,n\}$ be the set of agents. Each agent $i$ expresses to whom they want to delegate their voting weight, and the fraction they want to delegate, through the delegation weighting $x_i \in X$, with $X := \{ x \in \mathbb{R}^{n}_{\geq 0} : \sum_k x_k = 1\}$. Here $x_{ij}$ represents the fraction of voting weight that the agent $i$ wants to delegate to agent $j$. We define the matrix $P:= (x_1|...|x_n) \in \mathbb{R}^{n \times n}$ with the delegation weightings as columns, and we call it delegation matrix. Clearly $P$ is a (left)stochastic matrix (since $\sum_{j}P_{ji} = 1$), with the delegation weighting of agent $i$ stored in its $i$-th column. We denote by $\mathcal{P}^n\subset \mathbb R^{n\times n }$ the set of all possible delegation matrices. Note that while the delegation weightings from set X and delegation matrices from $\mathcal P^n$, are strongly related, it is convenient to treat these two concepts separately, especially in Section \ref{sec:nash}, where delegation weightings represent agents' strategies.

Given $i$ and $j \in N$, we say that there is a delegation path between $i$ and $j$ if there is a strictly positive sequence $(x_{ik_1},x_{k_1k_2},...,x_{k_{q-1} k_{q}},x_{k_qj})$ with $k_1,...,k_q \in N$. Given a vector $v \in \R^n$, we define the matrix $diag(v) \in \R^{n \times n}$ as $diag_{ii}(v) := v_i$ for all $1 \leq i \leq n$, and $diag_{ij}(v) := 0$ for all $i \not = j$. As is common in packages designed for numerical methods, if $A \in \R^{n \times n}$, we define $diag(A) \in \R^{n}$ as $diag_i(A) := A_{ii}$ for all $1 \leq i \leq n$. 
 Unless specified otherwise, throughout this paper vectors will always be regarded as column vectors.

\subsection{Motivation}\label{sub:problem_mutidel}

Let $N = \{1,...,n\}$ denote the set of agents in the classic model of liquid democracy (LD), where agents can delegate their voting weight to a single agent, including themselves, and we refer to agents who preserve all their voting weight as \textit{candidates}. The delegation matrices obtained under this paradigm correspond to adjacency matrices associated with delegation graphs defined by agent preferences. Defining $\calA^n := \{ A \in \calP^n : A_{ij} \in \{0,1\} \text{ for all } 1 \leq i,j \leq n \}$, the classic measure of voting weight is a function $V^c: \calA^n \to \R^n_{ \geq 0}$, where $V^c_i(A)$ represents the voting weight of agent $i$ under the delegation configuration given by the matrix $A \in \calA^n$. In the classic model of LD, the number of votes received by a candidate is determined as the sum of votes received by all agents who delegate their weight to the candidate, plus the votes of all agents who delegate to these agents, and so on. This relationship can be expressed recursively as follows: let $D_i$ denote the set of agents who choose to delegate their vote to agent $i$. The voting weight of agent $i$ is then determined by the relationship $V_i = \sum_{j \in D_i} V_j$.
 Although this procedure can be easily described recursively, we provide an equivalent vectorial definition, which is more manageable in the context of this work. To this end, for a given $A \in \calA^n$, we can provide a vectorial definition of $V^c: \calA^n \to \R^n_{ \geq 0}$ as follows:
\begin{equation}\label{eq:classicLD}
    V^c(A) := \lim_{k\to \infty} ( A^k \mathbf{1} ) \odot diag(A),
\end{equation}
where $\mathbf{1} = (1,...,1) \in \R^n$ and $\odot$ denotes the element wise product. Here $c$ stands for \textit{classic} \footnote{Although in the case $A \in \calA^n$ the limit in \eqref{eq:classicLD} converges after $n$ steps, this formula is readily generalizable to the case $A \in \calP^n$.}.

To illustrate the concept, consider the example shown in Figure \ref{fig:example_0}. In this example, the classical model yields a voting weight of 3 for agent 2, while all other agents have a voting weight of 0. Letting $A$ denote the adjacency matrix of this example, it is straightforward to verify that the entries of the vector $A^k \mathbf{1}$ corresponding to agents 1, 2, and 3 converge to the expected values starting from $k = 3$. In contrast, the entries corresponding to the other agents, who are part of a delegation cycle, oscillate as $k \to \infty$. These entries, however, are multiplied by zero as indicated in \eqref{eq:classicLD}, resulting in the expected outcome for the classical LD model.

\begin{figure}[ht]
\begin{center}
	\begin{tikzpicture}[->, >=stealth', auto, semithick, node distance=1cm]
	\tikzstyle{every state}=[fill=white,draw=black,thick,text=black,scale=0.7]
	\node[state]    (B) at (-1,0.5)    {$1$};
	\node[state]    (C) at (0.5,2)    {$2$};
	\node[state]    (C2) at (2,0.5)    {$3$};
	\node[state]    (E) at (3.5,0.5)    {$4$};
	\node[state]    (F) at (5,2)    {$5$};
	\node[state]    (F2) at (6.5,0.5)    {$6$};
	\path

	(B) edge[]	node{}	(C)
	(C2) edge[]	node{}	(B)
        (C) edge[loop above]	node{}	(C)
	(E) edge[bend left,above]	node{}	(F)
        (F2) edge[above]	node{}	(F)
        (F) edge[bend left,below]	node{}	(E);
	\end{tikzpicture}
\end{center}
\caption{An example illustrating a simple delegation configuration in the classical LD model.}
\label{fig:example_0}
\end{figure}

In general, given the properties of adjacency matrices, it can be readily observed that successive products $A^k \mathbf{1}$ direct and accumulate votes from agents to candidates through delegation chains. Note that $diag_i(A) \in \{0,1\}$ for all $i$, and $diag_i(A) = 1$ if and only if $i$ is a candidate. Thus, the element-wise product with $diag(A)$ preserves the entries of $A^k\mathbf{1}$ corresponding to candidates, while nullifying the remaining entries. This ensures that the limit in \eqref{eq:classicLD} is well-defined, as votes trapped in delegation cycles are multiplied by zero. Writing the recursive relationship of the classical model in matrix form, the equivalence with the formula proposed in \eqref{eq:classicLD} can be easily confirmed.

We now consider multi-agent delegation, that is, we aim to extend the function $V^c$ to the entire set $\calP^n$ in such a way that this function constitutes a reasonable generalization of the classic LD model. As a first idea one can use the definition \eqref{eq:classicLD} by considering $A \in \calP^n$, since the limit in this expression remains well-defined. This generalization is, in fact, closely related to the Eigenvector Centrality Measure \cite{BONACICH}, and related in general to spectral centrality measures, as the PageRank used as voting measure in \cite{castillo,viscousDemocracy,yoshida}. However, it is possible to see that this generalization does not satisfy the properties stated in the previous section. However, it should be mentioned that if we prohibit agents from preserving a fraction of votes for themselves that is different from 1 or 0, the aforementioned generalization preserves the properties stated in the previous section. 
      
In \cite{game}, it is shown that the classic LD model typically lacks equilibrium states in the context of delegation games. However, Nash equilibria can be achieved by incorporating mixed strategies, suggesting a potential multi-agent delegation model. Unfortunately, this model does not possess the delegation property (P2) (see \ref{subsec:apMS}), making it unsuitable as a generalization of the classic model. Our goal is to establish a measure of multi-agent voting weight to ensure equilibrium states while maintaining properties (P1) to (P3). To achieve this, we propose a generalization of the classic model by introducing penalties on the delegation chain length. Specifically, we demonstrate that as the penalty approaches zero, a voting weight measure that satisfies properties (P1) to (P3) is attained. However, like the classic model, the acceptance of equilibrium states by this limiting case voting measure is not guaranteed in general (see \ref{sec:app_V_not_stable}). Nonetheless, we show that with a positive penalty factor, the model achieves at least one Nash equilibrium in the delegation game. Moreover, properties (P1) to (P3) are upheld within a defined margin of error depending on the penalty factor. These findings are formally presented and summarized in Theorem \ref{teo:main}, which is our main result.    

\subsection{The voting weight measure} \label{sec:particles}
 Now our goal is to construct a function $V: \calP^n \to \R^{n}$ that represents the voting weight of each agent $i$, which is a reasonable generalization of $V^c$. In Section 3, we provide a formal definition of the proposed function $V$, as well as some alternative definitions, but in this section, we first introduce its probabilistic interpretation. Specifically, we construct $V$ using a particle system that depends on the agents and their delegation weightings, governed by the following rules:
\\

Given a delegation matrix $P \in \mathcal{D}$, within a time interval of length $\delta t>0$:   
\begin{itemize}
    \item Each agent $j$ receives 1 particle with probability $\delta t$.
    \item A particle in position $j$ has a probability $\delta t P_{ij}$ to jump to position $i$, with $j\not=i$.
    \item A particle in position $j$ leaves the system with probability $\delta t P_{jj}$. In such a case we shall say that the particle was consumed by agent $j$. 
\end{itemize}
 We consider the limit as $\delta t \to 0$, and define $V_i(P)$ as the mean rate of particles consumed by agent $i$ when the system reaches equilibrium.

 To compute $V_i(P)$, we introduce $u_i$ as the average number of particles occupying position $i$ in the equilibrium state. We can then estimate $V_i(P)$ as the expected number of particles consumed by agent $i$ per unit time, given by $P_{ii} u_i$. To determine $u_i$, we set up a set of balance equations for the particle system described in the previous section. Using the rules governing the system, the evolution of the number of particles in position $i$ is given by:  
\begin{align*}
  u_i(t+\delta t)  &=  u_i(t) + \delta t \Big( \sum_{j \not = i} u_j(t) P_{ij} + 1 - u_i(t) \Big), \\
  \frac{u_i(t+\delta t) -  u_i(t)}{\delta t}   &=  \sum_{j \not = i} u_j(t) P_{ij} + 1 - u_i(t). 
\end{align*}	
Since we look for $u_i$ at the equilibrium state, we have 
\begin{equation} \label{eq:prelim_equi}
    0 =  \sum_{j \not = i} u_j P_{ij} + 1 - u_i.
\end{equation}
The above expression can be written in matrix form:
$$0 = \Pt u - u + \mathbf{1},$$
\begin{equation}\label{eq:prelim_u}
    u = (I - \Pt)^{-1}\mathbf{1},
\end{equation}
with $\Pt = P - diag( (P_{11},...,P_{nn}) )$. 

At this point, we can verify that $V(P)$ satisfies some of the properties enlisted in the preceding section. In particular, part of the property (P3) can be readily established, as when an agent $i$ delegates all their voting weight, we get $P_{ii}=0$, and thus $V_{i}(P) = P_{ii} u_i = 0$. Furthermore, we can observe that to determine which agents consume a part of the voting weight of a certain agent $i$, we may use the same particle system defined above, with the constraint that the influx of particles is only through the agent $i$. That is, we calculate $V(P)$ from $(I - \Pt)^{-1}\bm{\delta}_i$, where $\bm{\delta}_i \in \R^ {n}$, $\bm{\delta}_{ii} = 1$ and $\bm{\delta}_{ij} = 0$ if $i \not = j$. Consequently, $V_j(P)$ will quantify the amount of voting weight consumed by agent $j$, coming from agent $i$.

\subsection{Solving delegation cycles}

It is clear that equation \eqref{eq:prelim_u} is not always solvable. This is due to the possible existence of delegation cycles. We formally define a delegation cycle as follows:
\begin{definition}[Delegation cycle]\label{def:cycle}
Given a set of agents $N = \{1,...,n\}$ and their delegation weightings $x_1,...,x_n$, we say that $C \subseteq N$ is a delegation cycle if $x_{ii} = 0$ for all $i \in C$, and given $i \in C$, $x_{ij}=0$ for all $j \in N \setminus C$. 
\end{definition}
Based on this definition, it follows that for a cycle $C$, any particle that reaches an agent $i \in C$ cannot exit the system. To address this issue, we propose the introduction of an artificial agent $n+1$ into the system, where each particle has a probability $\delta t \eps$ of transitioning to agent $n+1$ from any position within a time interval $\delta t$, where $0<\eps<1$ and $P_{n+1 n+1} = 1$. Specifically, we add agent $n+1$ with delegation weighting $x_{n+1}=(0,...0,1) \in \R^{n+1}$, and redefine the weightings of the remaining agents as $x_i := ( (1-\eps)x_i , \eps )$, where $i \in N$. It is worth noting that every particle that reaches agent $n+1$ is absorbed. We denote the voting weight measure obtained through this method as $V^{\eps}$.

Following the modification introduced, delegation cycles no longer pose a challenge, as any particle trapped in a cycle is consumed by agent $n+1$. However, this modification leads to the penalization of extended delegation paths. To illustrate, consider a classic delegation chain of length $k$ between agent $i$ and agent $j$. For a particle originating from agent $i$, the probability of exiting the system through agent $n+1$ before reaching agent $j$ is given by $p^{ k}_{\eps} = 1 - (1 - \eps)^k > 0$. Consequently, the voting weight of agent $i$ dissipates through agent $n+1$, which results in the penalization of long delegation paths \footnote{This damping factor is similar to the one employed in PageRank computation, and is examined in \cite{castillo, viscousDemocracy}}. Moreover, $V_{n+1}^{\eps}$ denotes the aggregate of all the voting weight dissipated through delegation cycles and extended delegation paths. Note that $p^{k}_{\eps} \to 0$ as $\eps \to 0$, yet the voting weight dissipated by delegation cycles remains invariant. To illustrate this, classic delegation cycles can be considered infinite-length delegation chains, and $p^{\infty}_{\eps} := \lim_{k \to \infty} p^{k}_ {\eps} = 1$, which establishes the independence of $\eps$.
In order to eliminate all penalties in long but finite chains, we may define
\begin{equation}\label{eq:def_vp_nonf}
V(P) := \lim_{\eps \to 0} V^{\eps}(P),
\end{equation}
where $V_{n+1}(P)$ accumulates the total amount of voting weight lost in delegation cycles.

\begin{remark}

It is worth mentioning that $V$ can be constructed in a much simpler manner. For instance, one can use a straightforward generalization of the model proposed in \cite{degrave}, or alternatively, a construction based on Markov chains (see Section \ref{sec:alternative}). Starting from these simpler constructions, it can be easily shown that the measure $V$ satisfies properties (P1)-(P3). However, it can be proven that using this measure, in the context of the delegation game defined in Section \ref{sec:nash}, it is not true that an equilibrium state always exists, similar to what occurs in the classic delegation model (see Section \ref{sec:app_V_not_stable}).
As mentioned earlier, the main contribution of this work focuses on the definition and analysis of the measure $V^{\eps}$. We will see below that this measure possesses the following properties:

\begin{itemize}
    \item $V^{\eps} \to V$ when $\eps \to 0$, meaning the limiting case defines a reasonable voting measure,
    \item $V^\eps$ has at least one equilibrium state whenever $\eps >0$,
    \item $V^\eps$ satisfies properties (P1)-(P3) within a certain margin of error depending on $\eps$.
\end{itemize}

\end{remark}

\section{Formal framework} \label{sec:formal}

Let $N = \{1,...,n\}$ be a set of agents with delegation weightings $x_1,...,x_n$, and $P$ their delegation matrix. We define 
\begin{equation*}\label{eq:Peps}
    P^{\eps} := 	\begin{pmatrix} 
	(1-\eps)P & \mathbf{0} \\
	\bm{\varepsilon} & 1  \\
	\end{pmatrix}, 
\end{equation*}
\begin{equation}\label{eq:Ueps}
    u^{\eps} :=  (I - \Pt^{\eps})^{-1} \bm{1}_0,
\end{equation}
and
\begin{equation}\label{eq:VPeps}
    V^{\eps}(P) := (P^{\eps}_{11}u^{\eps}_1,...,P^{\eps}_{nn}u^{\eps}_n,u^{\eps}_{n+1}),
\end{equation}
where $\bm{\eps} = (\eps,...,\eps) \in \R^{1 \times n}$, $\tilde{P}^{\eps} = P^{\eps} - diag \big( (P_{11}^{\eps},...,P_{nn}^{\eps},P_{n+1 n+1}^{\eps}) \big)$ and $\bm{1}_0 = (1,...,1,0) \in \R^{n+1}$. Note that, since $\eps > 0$ we have $\|\Pt^{\eps} \| < 1$, as a consequence $(I - \Pt^{\eps})^{-1}$ exists and $V^{\eps}(P)$ is well defined.  

We define the voting weight $V(P)$ associated to a delegation matrix $P$ as
\begin{equation}\label{eq:VP}
    V(P) := \lim_{\eps \to 0} V^{\eps}(P). 
\end{equation}
\subsection{Well-definedness}

To establish the existence of the limit in \eqref{eq:VP}, it is necessary to introduce the following auxiliary result.

\begin{lemma}\label{lemma:sequences}
Let $\{a_k\}_{k \in \N_0}$ and  $\{b_k\}_{k \in \N_0} \in \ell^{\infty}$ be two uniformly bounded sequences, where $a_k$ is periodic with period $q \in \N_0$, and $\sum_k | b_k | < \infty$. Then:
\begin{itemize}
    \item[$(i)$] $\lim_{\eps \to 0} \sum_{k \geq 0} \eps(1-\eps)^k b_k = 0,$ 
    \item[$(ii)$] $\lim_{\eps \to 0} \sum_{k \geq 0} \eps(1-\eps)^k a_k = \ell$, for some $\ell$ depending on $\{a_k\}_{k \in \N_0}$. 
\end{itemize}

\end{lemma}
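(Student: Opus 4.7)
The plan is to treat the two statements separately, since they rely on rather different features of the sequences.

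For (i), the key observation is that the measure $\eps(1-\eps)^k dk$ is bounded pointwise by $\eps$. Since $\sum_k |b_k| < \infty$ by hypothesis, I would estimate directly
\begin{equation*}
\Bigl| \sum_{k \geq 0} \eps(1-\eps)^k b_k \Bigr| \leq \eps \sum_{k \geq 0} |b_k|,
\end{equation*}
and let $\eps \to 0$. The bound $\sum_k |b_k|$ is a fixed finite constant, so the right-hand side vanishes. No further ingredient is needed.

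For (ii), summability of $a_k$ fails, so the strategy of (i) does not apply. Instead, I would exploit the periodicity by grouping terms whose indices are congruent modulo $q$. Writing $k = qm + r$ with $r \in \{0, \dots, q-1\}$ and $m \in \N_0$, the substitution $a_k = a_r$ (valid by periodicity) yields
\begin{equation*}
\sum_{k \geq 0} \eps(1-\eps)^k a_k = \sum_{r=0}^{q-1} a_r (1-\eps)^r \sum_{m \geq 0} \eps (1-\eps)^{qm} = \sum_{r=0}^{q-1} a_r \frac{\eps (1-\eps)^r}{1-(1-\eps)^q}.
\end{equation*}
The expression $1-(1-\eps)^q$ admits the Taylor expansion $q\eps + O(\eps^2)$ as $\eps \to 0$, so $\eps / (1-(1-\eps)^q) \to 1/q$. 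Combined with $(1-\eps)^r \to 1$, this gives the limit
\begin{equation*}
\ell = \frac{1}{q} \sum_{r=0}^{q-1} a_r,
\end{equation*}
i.e.\ the arithmetic mean of one period.

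There is no substantial obstacle; the argument is essentially a reindexing plus a geometric series computation, and the two cases illustrate that absolutely summable contributions are annihilated by the prefactor $\eps$, while periodic contributions survive and produce the average over a period.
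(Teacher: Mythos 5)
Your proposal is correct and follows essentially the same route as the paper: part (i) is the identical direct bound by $\eps \sum_k |b_k|$, and part (ii) uses the same decomposition by residue classes modulo $q$ followed by a geometric series; your Taylor expansion $1-(1-\eps)^q = q\eps + O(\eps^2)$ is just a more explicit version of the paper's factorization $S(\eps)=\eps R(\eps)$ with $R(0)=q$, and you additionally identify the limit as the mean over one period.
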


\begin{proof}

Since $\sum_{k \geq 0} |b_k| = c$ for some constant $c$, we proceed
$$ \big| \sum_{k \geq 0} \eps(1-\eps)^k b_k \big| \leq  \eps \sum_{k \geq 0} | b_k | = \eps c \to 0$$
when $\eps \to 0$. And then $(i)$ is proved. 

On the other hand, since $\{a_k\}$ is periodic with period $q$, there are constants $c_j$, with $j = 0,..,q-1$ such that $a_{qk + j} = c_j$ for all $k \geq 0$. Then, for every fixed $j$ we have
$$\sum_{k \geq 0} \eps(1-\eps)^{qk + j} a_{qk + j} = c_j(1-\eps)^j\sum_{k \geq 0} \eps(1-\eps)^{qk} = \frac{c_j (1-\eps)^j \eps }{1 - (1-\eps)^q}.$$
Observe that the denominator in the right-hand side corresponds to a polynomial $S(\eps)$ of degree $q$. Additionally, one can readily confirm that $S(0) = 0$ and $S'(0) \neq 0$. Therefore, it follows that $S(\eps) = \eps R(\eps)$, where $R$ denotes a polynomial of degree $q-1$, and satisfies $R(0) \neq 0$. Then  
$$\frac{c_j (1-\eps)^j \eps }{1 - (1-\eps)^q} = \frac{c_j (1-\eps)^j }{R(\eps)} \to \frac{c_j}{R(0)}$$
when $\eps \to 0$. Therefore, we can infer that $\lim_{\eps \to 0} \sum_{k \geq 0} \eps(1-\eps)^k a_k = \sum_{j=0}^{q-1} \frac{c_j}{R(0)}$, establishing the proof of $(ii)$.

\end{proof}

Using the previous lemma, we can prove the main result of this section.

\begin{theorem}\label{teo:limit}
Let $N = \{1,...,n\}$ be a set of agents with delegation weightings $x_1,...,x_n$, and $P$ their delegation matrix. Then the limit in \eqref{eq:VP} is well defined. 
\end{theorem}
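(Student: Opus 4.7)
The plan is to exploit the block structure of $\tilde{P}^{\eps}$ to reduce matters to scalar Neumann series in $(1-\eps)$, and then apply Lemma \ref{lemma:sequences} after decomposing the relevant sequences into a periodic and an absolutely summable piece.

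Setting $\tilde{P}:=P-\mathrm{diag}(P_{11},\ldots,P_{nn})$, the block form reads
\[
\tilde{P}^{\eps}=\begin{pmatrix}(1-\eps)\tilde{P}&\mathbf{0}\\ \bm{\eps}&0\end{pmatrix},
\]
so solving $(I-\tilde{P}^{\eps})u^{\eps}=\bm{1}_0$ blockwise gives $v^{\eps}:=(u^{\eps}_1,\ldots,u^{\eps}_n)$ satisfying $(I-(1-\eps)\tilde{P})v^{\eps}=\mathbf{1}$ and $u^{\eps}_{n+1}=\eps\,\mathbf{1}^{T}v^{\eps}$. Because $\|(1-\eps)\tilde{P}\|<1$, expanding the Neumann series yields
\[
v^{\eps}_i=\sum_{k\geq 0}(1-\eps)^{k}(\tilde{P}^{k}\mathbf{1})_i,\qquad u^{\eps}_{n+1}=\sum_{k\geq 0}\eps(1-\eps)^{k}\,\mathbf{1}^{T}\tilde{P}^{k}\mathbf{1}.
\]

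Next I would partition $N=T\sqcup C$, with $C$ the union of delegation cycles (Definition \ref{def:cycle}) and $T$ its complement. By Definition \ref{def:cycle} the columns of $\tilde{P}$ indexed by $C$ are supported in $C$, so the $C$-subspace is $\tilde{P}$-invariant and $\tilde{P}|_C$ is column stochastic; conversely $\tilde{P}|_T$ admits no closed column-stochastic subclass (such a subclass would itself be a cycle lying in $T$, a contradiction), hence $\rho(\tilde{P}|_T)<1$. For $i\in C$ one has $P_{ii}=0$, so $V^{\eps}_i(P)\equiv 0$ and its limit is trivial. For $i\in T$, since mass entering $C$ never returns to $T$, $(\tilde{P}^k\mathbf{1})_i=(\tilde{P}|_T^{\,k}\mathbf{1}_T)_i$ decays exponentially; dominated convergence then gives $v^{\eps}_i\to\sum_{k\geq 0}(\tilde{P}^k\mathbf{1})_i$, and hence $V^{\eps}_i(P)\to P_{ii}\sum_{k\geq 0}(\tilde{P}^k\mathbf{1})_i$.

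The last coordinate $V^{\eps}_{n+1}(P)$ requires writing $c_k:=\mathbf{1}^{T}\tilde{P}^{k}\mathbf{1}=a_k+b_k$ with $a_k$ periodic and $b_k$ absolutely summable, whereupon Lemma \ref{lemma:sequences}(i) handles $b_k$ and (ii) handles $a_k$. Starting mass in $T$ and passages $T\to C$ contribute exponentially small pieces to $c_k$ (by $\rho(\tilde{P}|_T)<1$) that go into $b_k$; within each irreducible closed cycle $C_\ell$ the column-stochastic matrix $\tilde{P}|_{C_\ell}$, by Perron--Frobenius for periodic chains, splits into a periodic part (its unit-modulus eigenvalues are roots of unity sitting in simple Jordan blocks, with period equal to the period of $C_\ell$) plus an exponentially decaying remainder. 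Taking the least common multiple of the cycle periods produces a common period for $a_k$. I expect this Perron--Frobenius step on the cyclic subblocks to be the main obstacle; once it is in place, combining the pieces with Lemma \ref{lemma:sequences} closes the proof.
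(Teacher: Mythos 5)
Your proposal is correct and follows essentially the same route as the paper's proof: the same Neumann-series expansion of $(I-\tilde{P}^{\eps})^{-1}$, the same block split of $\tilde{P}$ into a transient part with spectral radius below one (the paper's $N_1\cup N_2$, your $T$) and a column-stochastic cyclic part (the paper's $N_3$, your $C$), and the same appeal to Lemma \ref{lemma:sequences} combined with the fact that the unit-modulus eigenvalues of a stochastic matrix are semisimple roots of unity. One small imprecision (shared, in essentially the same form, by the paper's own treatment of $B_k\mathbf{1}$): the cumulative $T\to C$ cross contribution to $c_k$ is not itself exponentially small but converges to a nonzero constant with exponentially small deviation, so that constant must be absorbed into the periodic part $a_k$ rather than placed in the absolutely summable part $b_k$.
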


\begin{proof}
To understand the limiting behavior of $V^\eps$, the strategy involves expressing the matrix $(I - (1-\eps)\Pt)$ as a sum of powers of $\Pt$, thereby exploiting the structure of $\Pt$.

We have 
$$ I - \Pt^{\eps} = 	\begin{pmatrix} 
	I - (1-\eps)\Pt & \mathbf{0} \\
	-\bm{\varepsilon} & 1  \\
	\end{pmatrix}, $$
with $\Pt = P - diag \big( (P_{11},...,P_{nn} )\big).$
For the sake of simplicity we define $A_{\eps} := 	I - (1-\eps)\Pt$. Since $\eps > 0$ and $P$ is a stochastic matrix, we have $\|(1-\eps)\Pt\| < 1$. Thus $A_{\eps}^{-1}$ exists, and can be computed using the identity
\begin{equation}\label{potencia}
    A^{-1}_{\eps} = \sum_{k \geq 0} (1-\eps)^k\Pt^k.
\end{equation}
Utilizing the blockwise inversion formula yields:
$$ ( I - \Pt^{\eps} )^{-1} = 	\begin{pmatrix} 
	A_{\eps} & \mathbf{0} \\
	-\bm{\varepsilon} & 1  \\
	\end{pmatrix}^{-1} = 
	\begin{pmatrix} 
	A^{-1}_{\eps} & \mathbf{0} \\
	\bm{\varepsilon}A^{-1}_{\eps} & 1  \\
	\end{pmatrix}, $$
from we can derive
$$u^{\eps} = \begin{pmatrix} 
	A^{-1}_{\eps} & \mathbf{0} \\
	\bm{\varepsilon}A^{-1}_{\eps} & 1  \\
	\end{pmatrix}\bm{1}_0 = 
	\begin{pmatrix} A^{-1}_{\eps} \bm{1} \\ \bm{\eps} A^{-1}_{\eps} \bm{1} \end{pmatrix},$$
where $\bm{1} = (1,...,1) \in \R^n$.

Now we focus on the computation of $A^{-1}_{\eps} \bm{1} $. Using \eqref{potencia} we obtain $A^{-1}_{\eps} \bm{1} = \sum_{k \geq 0} (1-\eps)^k\Pt^k \bm{1}$. In order to analyze the powers of $\Pt$ we define:
\begin{align}
N_1 & :=  \{ i \in N \text{ such that } P_{ii} \not = 0 \}, \label{eq:N1}\\ 
N_2 & :=  \{ i \in N \setminus N_1 \text{ such that exists a delegation path from } i \text{ to } j \in N_1\}, \label{eq:N2} \\
N_3 & := N \setminus ( N_1 \cup N_2 ). \label{eq:N3}
\end{align}
Given that $N = N_1 \overset{d}{\cup} N_2 \overset{d}{\cup} N_3$, we adopt, without loss of generality, an indexing of the agents such that those belonging to $N_1$ appear first, those belonging to $N_2$ appear second, and those belonging to $N_3$ appear last. From this, we observe that $\Pt$ can be written as 
\begin{equation} \label{eq:numeric}
\Pt = \begin{pmatrix} 
	P_1 & \bm{0} \\
	P_2 & P_3  \\
\end{pmatrix},    
\end{equation}
with $P_1$ being a sub-stochastic matrix, and $P_3$ a stochastic matrix. It is well known that the powers of $\Pt$ can be computed as
\begin{equation} \label{eq:block_power}
    \Pt^k = \begin{pmatrix} 
	P_1^k & \bm{0} \\
	B_k & P_3^k  \\
\end{pmatrix},
\end{equation}
with $k \in \N$, and $B_k = \sum^{k-1}_{j=0}P_3^j P_2 P_1^{k - j -1}$. Then we have
$$
 \Pt^k \bm{1} = \begin{pmatrix} 
	P_1^k & \bm{0} \\
	B_k & P_3^k  \\
\end{pmatrix}\bm{1} =  \begin{pmatrix} 
	P_1^k\bm{1}  \\
	B_k \bm{1} + P^k_3\bm{1} \\
\end{pmatrix},
$$
as a consequence
\begin{equation}\label{eq:numeric2}
    A^{-1}_{\eps} \bm{1} = \sum_{k \geq 0} (1-\eps)^k\Pt^k \bm{1} = \begin{pmatrix} 
	\sum_{k \geq 0}(1-\eps)^k P_1^k\bm{1}  \\
		\sum_{k \geq 0}(1-\eps)^k (B_k \bm{1} + P^k_3\bm{1}) \\
\end{pmatrix}.
\end{equation}
Since $P_1$ is a sub-stochastic matrix, its spectral radium $\rho(P_1)<1$. Then, each entry of $P_1^k\bm{1}$ decays exponentially, and $\lim_{\eps \to 0} 	\sum_{k \geq 0}(1-\eps)^k P_1^k\bm{1}$ exists. Note that we do not need to compute $\sum_{k \geq 0}(1-\eps)^k (B_k \bm{1} + P^k_3\bm{1})$ since these entries correspond to agents $i \in N_3$ with $P_{ii} = 0$, and then $V^{\eps}_i(P) = 0$ for all $\eps >0$.     

Now our aim is to compute $\lim_{\eps \to 0}\bm{\eps} A^{-1}{\eps} \bm{1}$. By exploiting the fact that $\bm{\eps} A^{-1}{\eps} \bm{1} = \bm{1} \eps A^{-1}{\eps} \bm{1}$, our focus is solely on studying the behavior of $\eps A^{-1}{\eps} \bm{1}$. Specifically, we obtain
$$\eps A^{-1}_{\eps} \bm{1} = \begin{pmatrix} 
	\sum_{k \geq 0}\eps(1-\eps)^k P_1^k\bm{1}  \\
		\sum_{k \geq 0}\eps(1-\eps)^k (B_k \bm{1} + P^k_3\bm{1}) \\
\end{pmatrix}. $$
As previously established, $\lim_{\eps \to 0} \sum_{k \geq 0}(1-\eps)^k P_1^k\bm{1}$ exists. In this regard, we proceed to examine $\sum_{k \geq 0}\eps(1-\eps)^k (B_k \bm{1} + P^k_3\bm{1})$. Notably, the definition of $B_k$, coupled with the sub-stochasticity of $P_1$, enables us to confirm that each element of $B_k \bm{1}$ is an exponentially convergent sequence. We define $\lim_{k \to \infty} [B_k \bm{1}]_i =: \ell_i$, where $i \in \{1,\ldots,N+1\}$, and further introduce the vector $L \in \R^{N}$ such that $L_i = \ell_i$. Consequently, each element of the sequence $B_k \bm{1} - L$ converges exponentially to zero. This leads to the expression:

$$\sum_{k \geq 0}\eps(1-\eps)^k B_k \bm{1} = \sum_{k \geq 0}\eps(1-\eps)^k (B_k \bm{1} - L) + \sum_{k \geq 0}\eps(1-\eps)^k L , $$
$$ = \sum_{k \geq 0}\eps(1-\eps)^k (B_k \bm{1} - L) +  L,$$
where in the last step we computed the geometric summation. Since every entry of $B_k \bm{1}$ is a positive summable sequence, from Lemma \ref{lemma:sequences} we conclude that $ \lim_{\eps \to 0} \sum_{k \geq 0}\eps(1-\eps)^k (B_k \bm{1} - L) = 0$, and then
$$ \lim_{\eps \to 0} \sum_{k \geq 0}\eps(1-\eps)^k B_k \bm{1} = L.$$

We proceed to compute the expression $\lim_{\eps \to 0} \sum_{k \geq 0}\eps(1-\eps)^k P^k_3 \bm{1}$. It is pertinent to note that the only information available regarding $P_3$ is that it is a stochastic matrix. The powers of $P_3$ can be represented as:
$$P^k_3 = C_1 + C^k_2 + C^k_3,$$
where $C_i = X J_i X^{-1}$, being $J$ the Jordan normal form $P_3 = XJX^{-1}$, with $J = J_1 + J_2 + J_3$. Here, $J_1$ consists solely of the Jordan block corresponding to the eigenvalue $\lambda = 1$, $J_2$ encompasses the Jordan blocks of the eigenvalues $\lambda$ satisfying $|\lambda|=1$ and $\lambda \not = 1$, and $J_3$ comprises the remaining blocks. Since $P_3$ is a stochastic matrix, the geometric multiplicity of every eigenvalue $\lambda$ with $|\lambda|=1$ is $1$. Consequently, the sequence $C_2^k \bm{1}$ is periodic. Further, as per the definition of $J_3$, it follows that the sequence $C_3^k \bm{1}$ converges exponentially to $\bm{0}$.

From this we have 
$$\sum_{k \geq 0}\eps(1-\eps)^k P^k_3 \bm{1} =  \sum_{k \geq 0}\eps(1-\eps)^k C_1 \bm{1} +  \sum_{k \geq 0}\eps(1-\eps)^k C^k_2 \bm{1} + \sum_{k \geq 0}\eps(1-\eps)^k C^k_3 \bm{1}, $$
$$ = C_1 \bm{1} +  \sum_{k \geq 0}\eps(1-\eps)^k C^k_2 \bm{1} + \sum_{k \geq 0}\eps(1-\eps)^k C^k_3 \bm{1}.$$
Based on the findings in Lemma \ref{lemma:sequences}, it is concluded that the limit as $\eps$ tends to zero exists for both the second and third terms. Thus, the theorem is proven.
\end{proof}

\subsection{Generalization}\label{sec:generalization}
Up to this point, we have made the assumption that each agent holds an equal role in the decision-making process. In the context of the particle system introduced in Section \ref{sec:particles}, this assumption is reflected in the distribution of one particle to each agent per unit time. This is formalized in the definition of $V$, where the vector $\bm{1}_0$ is used to construct $u^{\eps}$ in \eqref{eq:Ueps}. To modify the inherent voting weight of each agent, we need only adjust the definition of $u^{\eps}$ by replacing the vector $\bm{1}_0$ with an alternative vector $f \in \R^{ n+1}$, where $f_i$ represents the inherent voting weight of agent $i$. We can then generalize the definitions of $u^{\eps}$, $V^{\eps}(P)$, and $V^{\eps}(P)$ as follows: Let $N = \{1,...,n\}$ denote a set of agents with delegation weightings $x_1,...,x_n$, $P$ their delegation matrix, and $f \in \R^{n+1}$. We define
\begin{equation}\label{eq:Weps}
    u^{\eps} :=  (I - \Pt^{\eps})^{-1} f,
\end{equation}
\begin{equation}\label{eq:VPfeps}
    V^{\eps}(P,f) := (P^{\eps}_{11}u^{\eps}_1,...,P^{\eps}_{nn}u^{\eps}_n,u^{\eps}_{n+1}),
\end{equation}
and 
\begin{equation}\label{eq:VPf}
    V(P,f) := \lim_{\eps \to 0} V^{\eps}(P,f).
\end{equation}
It is noteworthy that the proof of Theorem \eqref{teo:limit} can be straightforwardly adapted to encompass the extended definitions. Therefore, we obtain:
\begin{theorem}\label{teo:limit_f}
Let $N = \{1,...,n\}$ be a set of agents with delegation weightings $x_1,...,x_n$, $P$ their delegation matrix, and $f \in \R^{n+1}$. Then the limit in definition \eqref{eq:VPf} is well defined. 
\end{theorem}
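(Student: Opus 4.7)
The plan is to adapt the proof of Theorem \ref{teo:limit} essentially verbatim, exploiting the linearity of the construction in the input vector. Writing $f = (\tilde{f}, f_{n+1})$ with $\tilde{f} \in \R^n$, the blockwise inversion of $I - \Pt^{\eps}$ performed in the original proof yields
\[
u^{\eps} = \begin{pmatrix} A_{\eps}^{-1}\tilde{f} \\ \bm{\eps}\, A_{\eps}^{-1}\tilde{f} + f_{n+1} \end{pmatrix},
\]
so establishing $\lim_{\eps \to 0} u^{\eps}$ reduces to verifying that $\lim_{\eps \to 0} A_{\eps}^{-1}\tilde{f}$ and $\lim_{\eps \to 0} \eps A_{\eps}^{-1}\tilde{f}$ both exist; the scalar $f_{n+1}$ is $\eps$-independent and survives the limit intact.

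Expanding $A_{\eps}^{-1} = \sum_{k\geq 0}(1-\eps)^k \Pt^k$ and invoking the block decomposition of $\Pt$ from the previous proof, with $P_1$ sub-stochastic (hence $\rho(P_1) < 1$) and $P_3$ stochastic, we obtain
\[
\Pt^k \tilde{f} = \begin{pmatrix} P_1^k \tilde{f}_I \\ B_k \tilde{f}_I + P_3^k \tilde{f}_{II} \end{pmatrix},
\]
where $\tilde{f}_I$ and $\tilde{f}_{II}$ are the restrictions of $\tilde{f}$ to the two blocks. Every step of the convergence analysis in the original proof depends only on the spectral structure of $P_1$ and $P_3$, and is linear in the vector on which the block matrices act, so the substitution $\bm{1} \rightsquigarrow \tilde{f}_I, \tilde{f}_{II}$ is automatic: $P_1^k \tilde{f}_I$ still decays exponentially at rate $\rho(P_1)^k$; the Jordan decomposition $P_3^k = C_1 + C_2^k + C_3^k$ still presents $P_3^k \tilde{f}_{II}$ as a constant vector plus a coordinate-wise periodic sequence plus an exponentially decaying sequence; and $B_k \tilde{f}_I = \sum_{j=0}^{k-1} P_3^j P_2 P_1^{k-j-1}\tilde{f}_I$ remains controlled by combining the exponential decay of $P_1^{k-j-1}\tilde{f}_I$ with the boundedness of $P_3^j$.

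Applying Lemma \ref{lemma:sequences} to each piece, multiplying by $\eps$ where appropriate, and combining with the $\eps$-independent contribution $f_{n+1}$ in the last coordinate yields the existence of $\lim_{\eps\to 0} V^{\eps}(P,f)$. The adaptation is essentially a line-by-line substitution, and the only mild verification is that each decay and periodicity property survives replacing the all-ones vector by arbitrary $\tilde{f}_I, \tilde{f}_{II}$, which it does by linearity; no new technical idea is required beyond the machinery already developed for Theorem \ref{teo:limit}.
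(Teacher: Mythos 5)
Your proposal is correct and follows the paper's intended argument exactly: the paper itself proves Theorem \ref{teo:limit_f} by remarking that the proof of Theorem \ref{teo:limit} adapts straightforwardly, and your line-by-line substitution of $\bm{1}_0$ by $f=(\tilde f,f_{n+1})$ through the blockwise inversion, together with the observation that every decay, periodicity, and summability estimate used there is linear in the vector being acted upon, is precisely that adaptation. The only point worth making explicit is that Lemma \ref{lemma:sequences} requires no sign condition on the sequences, so possibly negative entries of $f$ cause no difficulty.
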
 

\subsection{Self-selection }

The self-selection property P3 can be formalized as follows:

\begin{definition}[Self-selection property]
Let $N = \{1, \ldots, n\}$ be a set of agents with delegation weightings $x_1, \ldots, x_n \in X$, $P$ their delegation matrix, and $f \in \R^{n+1}$ with $f_i \geq 0$ for all $i \in N$. We say that $V(P, f)$ satisfies the self-selection property if:
    \begin{itemize}
        \item $V_i(P,f) \geq P_{ii}f_i$, for all $i \in N$,
        \item $P_{ii} = 0$ implies $V_i(P,f) = 0$, for all $i \in N$.
    \end{itemize}
\end{definition}

The following lemma asserts that a non-negative value of the term $f$ implies a non-negative voting weight for each agent.

\begin{lemma} \label{Lem:upositive}
    Let $N = \{1,...,n\}$ be a set of agents with delegation weightings $x_1,...,x_n \in X$, $P$ their delegation matrix, and $f \in \R^{n+1}$ with $f_i \geq 0$ for all $i \in N$. Then $V_i^{\eps}(P,f) \geq 0$ for all $i \in N$, for all $\eps \in [0,1)$. 
\end{lemma}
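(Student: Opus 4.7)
The plan is to reduce the claim to the non-negativity of $A_\eps^{-1}$ via the block structure of $(I-\Pt^\eps)^{-1}$ already computed in the proof of Theorem \ref{teo:limit}, and then invoke the Neumann series representation to conclude positivity entrywise.

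More precisely, write $f = (f', f_{n+1})$ with $f' = (f_1,\dots,f_n) \in \R^n$. From the block decomposition
$$
( I - \Pt^{\eps} )^{-1} = \begin{pmatrix} A^{-1}_{\eps} & \mathbf{0} \\ \bm{\varepsilon} A^{-1}_{\eps} & 1 \end{pmatrix},
$$
derived in the proof of Theorem \ref{teo:limit}, one immediately reads off that for every $i \in N$,
$$
u^{\eps}_i \,=\, [A^{-1}_{\eps} f']_i,
$$
so the entry $u^{\eps}_i$ depends only on $f'$, not on $f_{n+1}$. This is the step where the sign hypothesis $f_i \geq 0$ for $i \in N$ (without any assumption on $f_{n+1}$) becomes meaningful.

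Next, I would appeal to \eqref{potencia}, i.e. the Neumann series
$$
A^{-1}_{\eps} = \sum_{k \geq 0} (1-\eps)^k \Pt^k,
$$
which is valid for $\eps \in (0,1)$ since $\|(1-\eps)\Pt\| < 1$. Because $P$ has non-negative entries and $\Pt$ is obtained from $P$ by zeroing out the diagonal, every matrix $\Pt^k$ is entrywise non-negative; together with $(1-\eps)^k \geq 0$ this shows $A_\eps^{-1} \geq 0$ entrywise. Combined with $f' \geq 0$ this yields $u^\eps_i = [A_\eps^{-1} f']_i \geq 0$ for every $i \in N$. Since $P^\eps_{ii} = (1-\eps) P_{ii} \geq 0$ as well, definition \eqref{eq:VPfeps} gives $V^\eps_i(P,f) = P^\eps_{ii} u^\eps_i \geq 0$, establishing the bound for $\eps \in (0,1)$.

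It remains to cover the boundary case $\eps = 0$. Here $A_0 = I - \Pt$ need not be invertible, so one cannot use the Neumann series directly; instead I would rely on the fact, established in Theorem \ref{teo:limit_f}, that $V^\eps(P,f)$ has a well-defined limit as $\eps \to 0^+$, and simply pass to that limit in the inequality $V^\eps_i(P,f) \geq 0$. The only mild obstacle in the whole argument is this limit step, but since non-strict inequalities are preserved by limits of convergent sequences, it is immediate. This concludes the plan.
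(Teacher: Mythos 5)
Your proof is correct, but it takes a genuinely different route from the paper's. The paper argues by a discrete minimum principle: it picks an index $k$ at which $u^\eps$ attains its minimum over $N$, uses the $k$-th equation of $(I-\Pt^{\eps})u^{\eps}=f$ together with $u^{\eps}_j\geq u^{\eps}_k$ to obtain $u^{\eps}_k\bigl(1-\sum_{j\neq k}\Pt^{\eps}_{kj}\bigr)\geq f_k$, and then divides. You instead read off from the block inversion in the proof of Theorem \ref{teo:limit} that $u^{\eps}_i=[A^{-1}_{\eps}f']_i$ for $i\in N$, and expand $A^{-1}_{\eps}$ via the Neumann series \eqref{potencia}, whose terms $(1-\eps)^k\Pt^k$ are entrywise non-negative because $\Pt\geq 0$; non-negativity of $u^{\eps}_i$, and hence of $V^{\eps}_i(P,f)=(1-\eps)P_{ii}u^{\eps}_i$, is then immediate. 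Both arguments use only material already established earlier in the paper, and both dispose of the boundary case $\eps=0$ the same way, by passing to the limit and noting that non-strict inequalities survive. Your version buys two things: it makes transparent why the sign hypothesis is imposed only on $f_1,\dots,f_n$ and not on $f_{n+1}$ (the first $n$ components of $u^{\eps}$ simply do not see $f_{n+1}$), and it never needs the quantity $1-\sum_{j\neq k}\Pt^{\eps}_{kj}$ to be positive --- a condition the paper's division step implicitly relies on and which is not automatic, since $P$ is column-stochastic rather than row-stochastic and so rows of $\Pt^{\eps}$ can sum to more than one. The paper's argument, in exchange, is shorter and self-contained at the level of the linear system, without invoking the block factorization. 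Either route is acceptable.
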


\begin{proof}
    Given $\eps \in (0,1)$, consider $u^{\eps}$ as in \eqref{eq:Weps}. Define $u^m = \min_{1 \leq i \leq n} u^\eps_i$, and observe that from \eqref{eq:Weps} we can estimate
\begin{align*}
u^m & = f_k + \Pt^\eps_{k1}u_1 + ... + \Pt^\eps_{kk-1}u_{k-1} + \Pt^\eps_{kk+1}u_{k+1} +... + \Pt^\eps_{kn}u_n, \\ 
    & \geq f_k + \Pt^\eps_{k1}u^m + ... + \Pt^\eps_{kk-1}u^m + \Pt^\eps_{kk+1}u^m +... + \Pt^\eps_{kn}u^m,  \\
u^m & \geq \frac{f_k}{1 - \Pt^\eps_{k1} - ... - \Pt^\eps_{kk-1} - \Pt^\eps_{kk+1} - ... - \Pt^\eps_{kn} } \geq 0,
\end{align*}
for some $k \in N$ such that $u_k = u^m$. Henceforth, it is established that $u_i^\eps \geq 0$ for all $i \in N$. As a consequence, it can be deduced that $V_i^{\eps}(P,f) \geq 0$ holds true for all $i \in N$ and for all $\eps \in (0,1)$. Furthermore, the scenario $\eps = 0$ can be trivially derived by taking the limit as $\eps$ approaches 0.
\end{proof}

The application of the aforementioned result allows us to establish the generalized version of the self-selection property (P3). The subsequent theorem states that an agent's voting weight is zero if they choose to delegate their entire voting weight, whereas if they retain a fraction of their voting weight for themselves, then their voting weight is no less than their inherent voting weight scaled by this fraction. The theorem is stated as follows.

\begin{theorem}\label{teo:propP3}
    Let $N = \{1,...,n\}$ be a set of agents with delegation weightings $x_1,...,x_n \in X$, $P$ their delegation matrix, and $f \in \R^{n+1}$ with $f_i \geq 0$ for all $i \in N$. Then we have:
    \begin{itemize}
        \item $P_{ii} = 0$ implies $V_i^\eps(P,f) = 0$,
        \item $V_i^\eps(P,f) \geq (1-\eps)P_{ii}f_i$ for all $i \in N$, and $\eps \in [0,1)$.
    \end{itemize}
\end{theorem}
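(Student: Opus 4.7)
The plan is to unwind the definitions of $V^{\eps}$ and $u^{\eps}$ and then appeal to Lemma \ref{Lem:upositive}. First, from \eqref{eq:VPfeps} and the block structure of $P^{\eps}$ one has, for every $i \in N$, the identity $V^{\eps}_i(P,f) = P^{\eps}_{ii}\, u^{\eps}_i = (1-\eps)P_{ii}\, u^{\eps}_i$. The first half of the claim is then immediate: if $P_{ii}=0$, then $P^{\eps}_{ii}=0$, so $V^{\eps}_i(P,f)=0$ regardless of $f$ and $\eps$.

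For the lower bound, I would rewrite $u^{\eps} = (I-\Pt^{\eps})^{-1}f$ as the fixed-point equation $u^{\eps} = f + \Pt^{\eps}u^{\eps}$. Since $\Pt^{\eps}$ has zero diagonal by construction, the $i$-th component reads
$$u^{\eps}_i = f_i + \sum_{j \neq i}\Pt^{\eps}_{ij}\,u^{\eps}_j.$$
Inspecting the block form of $P^{\eps}$, every off-diagonal entry $\Pt^{\eps}_{ij}$ is non-negative: for $j \in N \setminus \{i\}$ it equals $(1-\eps)P_{ij}\ge 0$, while for $j=n+1$ the top-right block of $P^{\eps}$ is zero, so this term does not appear. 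Combined with the hypothesis $f_j \geq 0$ for all $j \in N$, Lemma \ref{Lem:upositive} gives $u^{\eps}_j \geq 0$ for every $j \in N$, whence each summand in the displayed equation is non-negative. Therefore $u^{\eps}_i \ge f_i$, and multiplying by $(1-\eps)P_{ii} \ge 0$ yields $V^{\eps}_i(P,f) \geq (1-\eps)P_{ii}\,f_i$, as required.

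There is no substantial obstacle once Lemma \ref{Lem:upositive} is in hand; the argument is essentially a one-step unfolding of the resolvent identity. The only subtlety worth flagging in the write-up is that the top-right block of $P^{\eps}$ vanishes, which guarantees that in the fixed-point equation for $i \in N$ the index $j=n+1$ contributes nothing, so the sign of the (a priori unconstrained) coordinate $f_{n+1}$ is irrelevant and Lemma \ref{Lem:upositive} applies to every term that actually enters.
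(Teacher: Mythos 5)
Your argument is essentially identical to the paper's: both reduce the claim to the identity $V^{\eps}_i(P,f)=P^{\eps}_{ii}u^{\eps}_i$ and the one-step bound $u^{\eps}_i = f_i + \sum_{j\neq i}\Pt^{\eps}_{ij}u^{\eps}_j \geq f_i$ via the non-negativity of $u^{\eps}$ from Lemma \ref{Lem:upositive}. The only detail you omit is the endpoint $\eps=0$ (where $I-\Pt^{0}$ need not be invertible), which the paper disposes of by passing to the limit $\eps\to 0$ in the inequality just obtained.
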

\begin{proof}
    Given that $V_i^\eps(P,f) = P_{ii}u_i^\eps$, it can be readily verified that $P_{ii} = 0$ implies $V_i^\eps(P,f) = 0$ for all $1 \leq i \leq n$ and all $\eps \in (0,1)$. Moreover, from Lemma \ref{Lem:upositive}, which establishes that $u^\eps_i \geq 0$ for all $i \in N$, since $f_i \geq 0$ for all $i \in N$, we can estimate using \eqref{eq:Weps} that:
    $$
    u^\eps_i = f_i + \Pt^\eps_{i1}u_1 + ... + \Pt^\eps_{in}u_n \geq f_i.
    $$
As a result, we obtain $V_i^\eps(P,f) = P^{\eps}_{ii}u^\eps_i \geq (1-\eps)P_{ii}f_i$ for all $i \in N$, from which we can deduce the desired outcome for all $\eps \in (0,1)$. The scenario $\eps = 0$ can be easily handled by means of the aforementioned derivation, as we take the limit $\eps \to 0$.
\end{proof}

\subsection{Conservation of total voting weight}

The preservation of voting weight is a desired property in the generalization of the classic LD model, which requires that the sum of the voting weight of each agent and the lost voting weight in delegation cycles equals the number of agents. Although this property is usually achieved through the normalization of the voting weight vector, it is demonstrated in our case that this step is unnecessary and, therefore, preserves the probabilistic interpretation of the proposed voting measure discussed in Section \ref{sec:preli}. The proof of this property is presented in the following result, and its proof is given in Appendix \ref{proof:teo_cons}.

\begin{theorem}\label{teo:conservation}
Let $N = \{1,...,n\}$ be a set of agents with delegation weightings $x_1,...,x_n \in X$, $P$ their delegation matrix, and $f \in \R^{n+1}$. Then $\sum^{n+1}_{i=1} V^{\eps}_i(P,f) = \sum^{n+1}_{i=1} f_i$ for all $\eps \in [0,1)$. 
\end{theorem}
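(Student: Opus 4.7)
The plan is to exploit the fact that $P^{\eps}$ is column-stochastic and then manipulate the defining equation for $u^{\eps}$ by summing its components. First I would verify that $P^{\eps}$ is indeed column-stochastic: for each column $j \in \{1,\dots,n\}$, the entries sum to $(1-\eps)\sum_{i} P_{ij} + \eps = (1-\eps)+\eps = 1$ since $P \in \calP^n$, while column $n+1$ is just $(0,\dots,0,1)^T$. Hence $\sum_{i\neq j}P^{\eps}_{ij} = 1 - P^{\eps}_{jj}$ for every $j \in \{1,\dots,n+1\}$.

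Next I would unpack the definition \eqref{eq:Weps}, rewriting $(I-\tilde P^{\eps})u^{\eps}=f$ componentwise as
\begin{equation*}
u^{\eps}_i \;=\; f_i + \sum_{j\neq i}P^{\eps}_{ij}\,u^{\eps}_j, \qquad i=1,\dots,n+1.
\end{equation*}
Summing over $i$ and swapping the order of summation on the double sum on the right gives
\begin{equation*}
\sum_{i=1}^{n+1} u^{\eps}_i \;=\; \sum_{i=1}^{n+1} f_i + \sum_{j=1}^{n+1}u^{\eps}_j\sum_{i\neq j}P^{\eps}_{ij} \;=\; \sum_{i=1}^{n+1} f_i + \sum_{j=1}^{n+1}u^{\eps}_j(1 - P^{\eps}_{jj}),
\end{equation*}
where the second equality uses the column-stochasticity established above. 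Cancelling $\sum_j u^{\eps}_j$ from both sides yields the identity $\sum_{j=1}^{n+1}P^{\eps}_{jj}u^{\eps}_j = \sum_{i=1}^{n+1}f_i$.

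Finally I would identify the left-hand side with $\sum_{i=1}^{n+1}V^{\eps}_i(P,f)$. Since $P^{\eps}_{n+1,n+1}=1$, we have $P^{\eps}_{n+1,n+1}u^{\eps}_{n+1}=u^{\eps}_{n+1}=V^{\eps}_{n+1}(P,f)$, and the remaining terms $P^{\eps}_{ii}u^{\eps}_i$ for $i\leq n$ are exactly the first $n$ entries of $V^{\eps}(P,f)$ by \eqref{eq:VPfeps}. This gives the claimed equality for every $\eps \in (0,1)$, and the case $\eps = 0$ follows by passing to the limit, which is justified by Theorem \ref{teo:limit_f}. There is no serious obstacle here; the only care required is in the bookkeeping of which indices run from $1$ to $n$ versus $1$ to $n+1$, and in checking that the auxiliary agent's diagonal entry equals one so that it integrates seamlessly into the identity $V^{\eps}_j = P^{\eps}_{jj}u^{\eps}_j$.
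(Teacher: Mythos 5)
Your proof is correct and follows essentially the same route as the paper's: both sum the defining linear system $(I-\Pt^{\eps})u^{\eps}=f$ over all components and use the column-stochasticity of $P^{\eps}$ to cancel $\sum_j u^{\eps}_j$, leaving $\sum_j P^{\eps}_{jj}u^{\eps}_j=\sum_i f_i$, with the $\eps=0$ case handled by passing to the limit. The only cosmetic difference is that you argue componentwise while the paper works in matrix form via $\Pt^{\eps}=P^{\eps}-D$.
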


\subsection{Numerical computation}\label{sec:numerical}

From the proof of Theorem \ref{teo:limit}, it is possible to obtain a procedure for computing $V$ that avoids the need for computing $V^{\eps}$. This procedure is used later for technical purposes. Specifically, \eqref{eq:numeric} and \eqref{eq:numeric2} imply that only the inverse of $(I - P_1)$ needs to be computed, where $P_1$ is a sub-stochastic matrix. This is due to the fact that the voting weight of agents in $N_3$ (and $N_2$) is zero, and therefore, it suffices to obtain the voting weight of agents in $N_1$. This observation is summarized in Algorithm \ref{alg:proc}.
\begin{algorithm}[ht]
\caption{\textbf{Computation of $V(P,f)$}}
\label{alg:proc}
\begin{flushleft}
\textbf{Input:} A set of agents $N = \{1,...,n\}$, their delegation weightings $x_1,...,x_n$, and a vector $f \in \R^{n+1}$;
\\
$\boldsymbol{1}$: Define the delegation matrix $P$ using the weightings $x_1,...,x_n$, and $N_1$ and $N_2$ as in \eqref{eq:N1} and \eqref{eq:N2} respectively. Define $I_i$ as the $i$-th element of $N_1 \cup N_2$, with $1 \leq i \leq \#(N_1 \cup N_2)$;
\\
$\boldsymbol{2}$: Define $\Pt := P - diag( (P_{11},...,P_{nn}) )$, and $\Pt_r$ as the restriction of $\Pt$ to $N_1 \cup N_2$. That is, remove from $\Pt$ the $i$-th row and the $i$-th column, for all $i \not \in N_1 \cup N_2$. Analogously, define $f_r$ removing the $i$-th entry of $f$ for all $i \not \in N_1 \cup N_2$. 
\\
$\boldsymbol{3}$: If $\Pt_r \not = \emptyset$, compute $u_r = (I - \Pt_r)^{-1} f_r$;
\\
$\boldsymbol{4}$: Compute $V_{I_i}(P,f) = P_{I_i I_i} [u_r]_i$ for all $1 \leq i \leq \#(N_1 \cup N_2)$, and $V_i(P,f) = 0$ for all $i \not \in N_1 \cup N_2 \cup \{n+1\}$. Finally, using Theorem \ref{teo:conservation}, compute $V_{n+1}(P,f) = \sum^{n+1}_{i=1} f_i - \sum^n_{i=1} V_i(P,f)$;
\end{flushleft}
\end{algorithm}

\subsection{Alternative definitions of $V$}\label{sec:alternative} 

As we mentioned before, $V$ can be defined directly without relying on the functions $V^\eps$. Algorithm \ref{alg:proc} provides a way to compute $V$ directly. It is worth noting that this procedure removes agents belonging to delegation cycles from the system to ensure that the matrix $I-P_1$ is invertible. However, considering that entries of $V$ that correspond to agents belonging to cycles are be multiplied by zero in the last step, $V$ can be defined in terms of powers of the delegation matrix. Specifically, let $P \in \calP^n$ and $f\in \R^n$, then $V: \calP^n \times \R^n \to \R^n$ can be defined as: 
\begin{equation}\label{eq:alter_def}
V(P,f) := \lim_{k \to \infty} \big(\sum^k_{\ell = 0} \Pt^k f\big) \odot diag(P),    
\end{equation}
where $\Pt = P - diag((P_{11},...,P_{nn}))$. It should be noted that the value associated with node $n+1$ may be computed utilizing Theorem \ref{teo:conservation}, as stipulated in step 4 of Algorithm \ref{alg:proc}.

On the other hand, an equivalent definition can be obtained using Markov Chains by considering two types of states for each agent: an absorbing state that accounts for their voting weight and a delegation state that manages interaction with the rest of the system. More precisely, for any voter not in a delegation cycle, we introduce two states: one "delegation state" and one "casting state". Additionally, all voters within a delegation cycle are contracted into a single "null state". For any non-cycle agent, we add a transition probability of $x_{ii}$ from its delegation state to its casting state. For any other agent $j$, if $j$ is a non-cycle agent, we add a transition probability of $x_{ij}$ from $i$'s delegation state to $j$'s delegation state. If $j$ is a cycle-agent, we add a transition probability of $x_{ij}$ to the null state. Thus, the resulting Markov chain is absorbing, with absorbing states corresponding precisely to the casting states and the null state. Utilizing the classical properties of Markov chains, a direct proof can be constructed to demonstrate that this measure satisfies properties (P1) to (P3).

\subsection{The generalization property}
The property (P1) can be formalized as follows:

\begin{definition}[Generalization property]\label{def:P1}
    Let $\calB^n := \{ B \in \calP^n : B_{ii} \in \{0,1\} \text{ for all } 1 \leq i \leq n \}$, let $V: \calP^n \to \R^n$, and $V^c$ the classic voting measure defined in \eqref{eq:classicLD}. We say that $V$ has the generalization property if $V(P) = V^c(P)$ for all $P \in \calB^n$.
\end{definition}

Expanding upon the previous section's alternative definition of $V$, we establish that this voting measure is a generalization of the classic LD model, as it satisfies the property (P1). Moreover, the following result demonstrates that the proposed measure $V$ and the standard generalization of the classic model defined in \eqref{eq:classicLD} are equivalent, provided that agents are unable to retain a portion $q \in (0,1)$ of their vote. Its proof is given in Appendix \ref{proof:teo_propP1}.

\begin{theorem}\label{teo:propP1}
Let $\calB^n := \{ B \in \calP^n : B_{ii} \in \{0,1\} \text{ for all } 1 \leq i \leq n \}$, and let $V$ and $V^c$ the voting measures defined in \eqref{eq:VP}, and \eqref{eq:classicLD} respectively. Then $V_i(P) = V_i^c(P)$ for all $P \in \calB^n$, and $i \in N$. 
\end{theorem}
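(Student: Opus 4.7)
The plan is to work from the alternative characterisation of $V$ provided in~\eqref{eq:alter_def}, namely
$V(P,\mathbf{1})=\lim_{k\to\infty}\bigl(\sum_{\ell=0}^{k}\Pt^{\ell}\mathbf{1}\bigr)\odot diag(P),$
which mirrors the classical formula~\eqref{eq:classicLD} term by term. Once both measures are written in this form the comparison reduces to an exact block-matrix identity.

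First, I would partition $N=D\cup C$ with $C:=\{i\in N:P_{ii}=1\}$ (the \emph{candidates}) and $D:=\{i\in N:P_{ii}=0\}$ (the \emph{delegators}); this is an exhaustive, disjoint partition because $P\in\calB^{n}$. Column stochasticity together with $P_{ii}=1$ for $i\in C$ forces the $i$-th column of $P$ to equal the standard basis vector for every $i\in C$. Relabelling indices so that $D$ comes before $C$, one obtains
$$
P=\begin{pmatrix} P_{DD} & 0 \\ P_{CD} & I_{C} \end{pmatrix}, \qquad
\Pt=\begin{pmatrix} P_{DD} & 0 \\ P_{CD} & 0 \end{pmatrix},
$$
with $(P_{DD})_{ii}=0$ for every $i\in D$.

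Next, a one-line induction yields, for every $k\ge 1$,
$$
P^{k}=\begin{pmatrix} P_{DD}^{k} & 0 \\ P_{CD}\sum_{j=0}^{k-1}P_{DD}^{j} & I_{C}\end{pmatrix},\qquad
\Pt^{\ell}=\begin{pmatrix} P_{DD}^{\ell} & 0 \\ P_{CD}\,P_{DD}^{\ell-1} & 0 \end{pmatrix}\ (\ell\ge 1),
$$
so that the $C$-blocks of $P^{k}\mathbf{1}$ and of $\sum_{\ell=0}^{k}\Pt^{\ell}\mathbf{1}$ coincide, both equalling $\mathbf{1}_{C}+P_{CD}\sum_{j=0}^{k-1}P_{DD}^{j}\mathbf{1}_{D}$. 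Their $D$-blocks disagree but are immaterial, since multiplication by $diag(P)$ — which is $1$ on $C$ and $0$ on $D$ — zeroes them out. Hence the two finite-$k$ approximants to $V^{c}(P)$ and to $V(P)$ agree for every $k$, and passing to the limit gives $V(P)=V^{c}(P)$ on all of $N$. Existence of the limit on the $V$-side is guaranteed by Theorem~\ref{teo:limit} (equivalently by the analysis underlying~\eqref{eq:alter_def}), and on the $V^{c}$-side by the standard behaviour of adjacency-matrix powers under simple delegation.

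The delicate point I anticipate is the convergence of $P_{CD}\sum_{j=0}^{\infty}P_{DD}^{j}\mathbf{1}_{D}$ when $D$ contains a delegation cycle: on the cyclic part of $D$ the matrix $P_{DD}$ is merely stochastic (not strictly sub-stochastic), so $\sum_{j}P_{DD}^{j}\mathbf{1}_{D}$ diverges there. Finiteness of the product is recovered because $P_{CD}$ annihilates exactly the rows of $D$ that belong to a cycle — these are the indices in $N_{3}$ in the notation of the proof of Theorem~\ref{teo:limit} — and on the remaining rows the relevant block is strictly sub-stochastic. Since this is precisely the $N_{1},N_{2},N_{3}$ decomposition already exploited in that proof, I would invoke it directly rather than redo the estimate, so no additional analytical work is needed beyond the block identities above.
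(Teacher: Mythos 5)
Your proposal is correct and follows essentially the same route as the paper's proof: both pass to the alternative characterisation \eqref{eq:alter_def}, reorder so candidates come last, compute the block powers of $P$ and $\Pt$, observe that the candidate blocks of $P^k\mathbf{1}$ and $\sum_{\ell=0}^k\Pt^\ell\mathbf{1}$ coincide, and let $diag(P)$ annihilate the delegator entries. The only cosmetic difference is that the paper disposes of delegation cycles up front (both measures assign them zero, so it assumes none are present and gets a genuinely sub-stochastic $P_{DD}$), whereas you keep them and argue that the zero columns of $P_{CD}$ at the $N_3$ indices neutralise the divergent entries — which is fine provided you sum the finite-$k$ products before passing to the limit, exactly as your last paragraph indicates.
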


Analogous results hold for $V^\eps$, which establish that this voting weight measure can be interpreted as a generalization of the classic LD model with a specific margin of error, subject to the penalty parameter $\eps$. As before, its proof is provided in Appendix \ref{proof_teo_P1delta}. 

\begin{theorem}\label{teo:propP1delta}
Let $\calA^n := \{ A \in \calP^n : A_{ij} \in \{0,1\} \text{ for all } 1 \leq i,j \leq n \}$, and let $V^\eps$ and $V^c$ the voting measures defined in \eqref{eq:VPeps}, and \eqref{eq:classicLD} respectively. Given $\delta >0$ there exist $\eps_\delta >0$ such that if $\eps < \eps_\delta$ then $\max_{1\leq i \leq n}|V^\eps_i(P) - V_i^c(P)| < \delta$ for all $P \in \calA^n$. 
\end{theorem}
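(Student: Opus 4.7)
The plan is to exploit the fact that $\calA^n$ is a \emph{finite} set in order to upgrade pointwise convergence (already essentially established by Theorems \ref{teo:limit} and \ref{teo:propP1}) into uniform convergence over $\calA^n$. Concretely, every $A \in \calA^n$ is a left-stochastic matrix with $0/1$ entries, so each column has exactly one $1$ and the rest zeros; hence $|\calA^n| = n^n < \infty$.

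The first step is to observe that $\calA^n \subset \calB^n$, since if $A_{ij} \in \{0,1\}$ for every $i,j$ then in particular $A_{ii} \in \{0,1\}$. Therefore Theorem \ref{teo:propP1} applies and yields $V_i(P) = V^c_i(P)$ for every $P \in \calA^n$ and every $i \in N$. Next, by the very definition \eqref{eq:VP}, $V(P) = \lim_{\eps \to 0} V^{\eps}(P)$; combining these two facts gives, for each fixed $P \in \calA^n$, the pointwise statement
\begin{equation*}
\lim_{\eps \to 0} \max_{1 \leq i \leq n} \bigl| V^{\eps}_i(P) - V^c_i(P) \bigr| = 0.
\end{equation*}

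Given $\delta > 0$, this pointwise convergence gives, for each $P \in \calA^n$, a threshold $\eps_P > 0$ such that $\max_{1 \leq i \leq n} |V^{\eps}_i(P) - V^c_i(P)| < \delta$ whenever $\eps < \eps_P$. Setting $\eps_\delta := \min_{P \in \calA^n} \eps_P$ and using that the minimum of finitely many positive numbers is positive, we conclude the desired uniform statement.

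The only potential obstacle is ensuring that the $i=n+1$ component, which appears in $V^{\eps}$ but not in $V^c$, does not interfere; but the statement only asks for the maximum over $1 \leq i \leq n$, so this component is simply ignored. Thus the whole argument reduces to combining the two previously established results with the finiteness of $\calA^n$, and there is no genuine analytic difficulty beyond what is already contained in Theorem \ref{teo:limit} and Theorem \ref{teo:propP1}.
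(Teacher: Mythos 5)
Your proof is correct, but it takes a genuinely different route from the paper's. You exploit the finiteness of $\calA^n$ (each column of a $0/1$ left-stochastic matrix has exactly one nonzero entry, so $|\calA^n|=n^n$) together with the inclusion $\calA^n\subset\calB^n$, which lets you combine Theorem \ref{teo:propP1} (identifying $V$ with $V^c$ on $\calA^n$) and Theorem \ref{teo:limit} (pointwise convergence $V^\eps(P)\to V(P)$) and then pass from pointwise to uniform convergence by taking the minimum of finitely many positive thresholds. The paper instead reduces, via the same identity $|V^\eps_i(P)-V^c_i(P)|=|V^\eps_i(P)-V_i(P)|$, to a structural argument: for $P\in\calA^n$ the sub-stochastic block $P_1$ from \eqref{eq:numeric} is nilpotent, so the series $\sum_{k\geq 0}(1-\eps)^kP_1^k\bm{1}$ truncates after at most $n+1$ terms, from which uniform convergence over $\calA^n$ follows directly. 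Your argument is shorter and avoids any analysis of the delegation graph, but it is purely an existence statement: $\eps_\delta$ is obtained as a minimum over $n^n$ matrices and carries no quantitative information. The paper's nilpotency argument, by contrast, yields in principle an explicit rate (the error is governed by $1-(1-\eps)^k$ for $k\leq n+1$, hence roughly linear in $n\eps$) and would survive in settings where the relevant class of matrices is not finite. Both are valid proofs of the stated theorem; yours is the more elementary, the paper's the more informative.
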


\subsection{The delegation property}

This section presents the formalization and proof of the delegation property (P2), which establishes the legitimacy of delegation. Specifically, the property posits that if agent $i$ delegates her voting weight to agent $j$, who subsequently delegates all of her voting weight to other agents, the resulting weight distribution would be equivalent to that obtained if agent $i$ had utilized delegation weighting $x_j$ instead of $x_i$. Furthermore, when agent $i$ assigns, let say, $\frac{1}{4}$ and $\frac{3}{4}$ of her voting weight to agents $j$ and $k$, respectively, and these proxies also delegate all their voting weight, the corresponding voting weight distribution would match that produced by the delegation weighting $\frac{1}{4}x_j + \frac{3}{4}x_k$ instead of $x_i$. Our examples assume that agents $j$ and $k$ do not delegate a portion of their voting weight back to agent $i$. However, in such cases, the fraction of the voting weight is distributed proportionally among the proxies. We formally state (P2) in the following definition:

\begin{definition}[Delegation property]\label{def:P2}
     Consider a set $N = \{1,...,n\}$ of agents with delegation weightings $x_1,...,x_n \in X$, their delegation matrix $P$, and $f \in \R^{n+1}$. Let $D \subset N$, such that $x_{ii} = 0$ for all $i \in D$, and suppose that there exists an agent $k \in N \setminus D$ such that $x_{kj} > 0$ for all $j \in D$, and $x_{kj}=0$ otherwise. Moreover, assume that agent $k$ is not involved in a delegation cycle. For each $i \in D$, we define $x^*_i \in \R^n$ as $x^*_{ij} := x_{ij}$ for all $j \not = k$ and $x^*_{ik} := 0$, and $x^*_k \in X$ as
\begin{equation}\label{eq:delegation}
    x^*_k := \frac{\sum_{i \in D} x_{ki}x^*_{i}}{  1 - \sum_{i \in D} x_{ki}x_{ik} };
\end{equation}
We also define $P^* \in \R^{n \times n}$ as $P^* := (x_1 | ... | x_{k-1} | x^*_k |... | x_n)$, which is the delegation matrix obtained by replacing $x_k$ with $x^*_k$ in $P$. Let $V: \calP^n \times \R^{n+1} \to \R^{n+1}$ a voting weight measure. We say that $V$ satisfies the delegation property if $V(P, f) = V(P^*, f)$ for all $P \in \calP^n$ and $f \in \R^n$.
\end{definition}

The following result establishes that the voting measure defined in Section \ref{sec:formal} satisfies the delegation property. \footnote{While this property can be readily shown using an alternative definitions of $V$ given in Section \ref{sec:alternative}, the techniques employed in the proof of Theorem \ref{teo:delegation} are subsequently applied to establish the $\delta$-Delegation property in Theorem \ref{teo:delta_delegation}, which is the main result of this section.
}

\begin{theorem}\label{teo:delegation}
  Consider a set $N = \{1, \ldots, n\}$ of agents with delegation weightings $x_1, \ldots, x_n \in X$, their delegation matrix $P$, and $f \in \R^{n+1}$. Let $V$ be the voting measure defined in Section \ref{sec:generalization}. Then, $V$ satisfies the delegation property given in \fran{Definition} \ref{def:P2}.
\end{theorem}

\begin{proof}
To establish that $V(P,f) = V(P^*,f)$, the strategy involves noting that $V(P^*,f)$ can be represented in terms of $P$ through a change variable. This change is obtained by expressing a reduced version of the matrix $(I - \Pt^*)$ in terms of $(I - \Pt)$, achieved by operating on its columns.

 We begin by verifying the well-definedness of \eqref{eq:delegation}, that is, $1 - \sum_{i \in D} x_{ki}x_{ik} > 0$. To establish this, we assume the contrary, i.e., $\sum_{i \in D} x_{ki}x_{ik} = 1$. Since $0 \leq x_{ik}, x_{ki} \leq 1$ for all $i \in N$, and $\sum_{i \in D} x_{ki} = 1$, it follows that $x_{ik} = 1$ for all $i \in D$. Subsequently, since $x_{kj} = 0$ for all $j \not \in D$, it implies that $D \cup { k }$ forms a delegation cycle, which contradicts our assumption that agent $k$ is not part of a cycle.

 To apply the procedure detailed in Section \ref{sec:numerical}, we introduce the sets $N_1$, $N_2$, and $N_3$ as specified in \eqref{eq:N1}, \eqref{eq:N2}, and \eqref{eq:N3}, respectively. Similarly, we define $N^*_1$, $N^*_2$, and $N^*_3$ using the delegation weighting $x^*_k$ instead of $x_k$. We assume that the agents in $N$ are arranged in a way that agents in $N_1$ appear first, followed by agents in $N_2$, then agents in $N_3$, and finally $k = \#(N_1 \cup N_2)$, for simplicity. Note that an agent $i \in N$ belongs to a cycle if and only if $i>k$. Let us define $D' := D \cap (N_1 \cup N_2)$, and observe that $D'$ cannot be empty, otherwise $D \cup {k}$ would form a cycle, which contradicts our assumption.

We now establish that $N^*_3 = N_3$. Firstly, since agent $k$ is not part of any cycle, changing her delegation weighting by $x^*_k$ cannot resolve any cycle. Therefore, $N_3 \subseteq N^*_3$. On the other hand, since $D' \not = \emptyset$, there exists an agent $q \in D'$ such that $x_{qs}>0$, where $s \in (N_1 \cup N_2)$. By definition \eqref{eq:delegation}, it follows that $x^*_{ks} > 0$. Thus, agent $k$ does not belong to $N^*_3$. Consequently, no new delegation cycle is formed when we replace $x_k$ with $x^*_k$, and we can deduce that $N^*_3 \subseteq N_3$.

The equality $N_3 = N^*_3$ implies that $N_1 = N^*_1$ and $N_2 = N^*_2$. Consequently, Algorithm \ref{alg:proc} can be applied to $P$ and $P^*$, and it can be observed that their restricted matrices, denoted by $\Pt_r$ and $\Pt^*_r$ respectively, share the same dimension. Specifically, we have $\Pt_r, \Pt^*_r \in \R^{k \times k}$. To determine $V(P,f)$ and $V(P^*,f)$, we need to solve the linear systems:
\begin{equation*}
    (I - \Pt_r)u = f_r,
\end{equation*}
and 
\begin{equation*}
    (I - \Pt^*_r)u^* = f_r
\end{equation*}
respectively, where $f_r$ is the restricted source term defined in Algorithm \ref{alg:proc}. It suffices to show that $u_i = u^*_i$ holds for all $i \in N_1$, given that $V_i(P,f) = V_i(P^*,f) = 0$ for all $i \in N_2 \cup N_3$. To this end, we introduce the matrix $A := (I - \Pt_r)$ and we define the variable $u'$ as follows:
\begin{equation}\label{eq:changev}
\left\lbrace
\begin{aligned}
u'_i & = u_i - x_{ki}u_k  &\mbox{ for all } i \in D', \\
u'_k & = u_k (1 - \sum_{i \in D'} x_{ki}x_{ik}),  \\
u'_i & = u_i &\mbox{ for all } i \not \in D' \cup \{k\}.
\end{aligned}
\right.
\end{equation}
From this, it can be seen that $u' \in \R^k$ solves the linear system 
\begin{equation*}
    \Big(A_1 \Big|...\Big|A_{k-1}\Big| \frac{A_k+\sum_{i \in D'} x_{ki}A_i}{(1 - \sum_{i \in D'} x_{ki}x_{ik})}\Big)u' = f_r,
\end{equation*}
where $A_i$ denotes the $i$-th column of $A$. Now, we observe that $A_i = I_i - x_i$ for all $i \in D'$, then 
\begin{equation}\label{eq:exp_delta1}
    A_{ki} - x_{ki}A_{ii} = x_{ki} - x_{ki} - x_{ki}x_{ii} = - x_{ki}x_{ii},
\end{equation}
and
\begin{equation}\label{eq:exp_delta2}
    A_{kk} - x_{ki}A_{kk} = 1 - x_{ki}x_{ik}.
\end{equation}
Then we have
\begin{equation}\label{eq:exp_delta3}
    A_k+\sum_{i \in D'} x_{ki}A_i = (1 - \sum_{i \in D'} x_{ki}x_{ik})I_k - \sum_{i \in D} x_{ki}x^*_{i}.
\end{equation}
Based on the dependence of $\Pt^*_r$ construction solely on $x^*_{ki}$ values for $i\leq k$, that are determined by $x_i$ for $i \in D' \cup {k}$, we can infer that
\begin{equation*}
    (I - \Pt^*_r) = \Big(A_1 \Big|...\Big|A_{k-1}\Big| \frac{A_k+\sum_{i \in D'} x_{ki}A_i}{(1 - \sum_{i \in D'} x_{ki}x_{ik})}\Big).
\end{equation*}
In conclusion, we establish the equality $u' = u^*$ and, by virtue of the expression given in \eqref{eq:changev}, deduce that $u_i = u'_i = u^*_i$ holds for every $i \in N_1$. Thus, the theorem is demonstrated.
\end{proof}

To provide further clarification regarding Theorem \ref{teo:delegation} and its scope, let us first consider the example depicted in Figure \ref{fig:example0}. This example showcases a delegation scenario and its equivalent representation as per Theorem \ref{teo:delegation}, using $N=\{1,...,4\}$, $k=3$, $D = \{2\}$. Obtaining the equivalent delegation setting implies replacing $x_3$ with $x^*_3$ based on \eqref{eq:delegation}. The resulting configuration reveals that delegating all the voting weight to agent $2$ is equivalent to replicating her delegation weighting.

The example shown in Figure \ref{fig:example1} demonstrates the outcome when an agent delegates to another agent who, in turn, delegates a fraction of her voting weight back to the first agent. The delegation weighting in this case follows the same parameters as the example in Figure \ref{fig:example0}, where $N=\{1,...,4\}$, $k=3$, and $D=\{2\}$, according to Theorem \ref{teo:delegation}. Using the delegation formula in \eqref{eq:delegation}, we find that if agent $3$ delegates all her voting weight to agent $2$, this is equivalent to copying agent $2$'s delegation weighting, disregarding the fraction delegated by $2$ to $3$, and normalizing the resulting vector to obtain a delegation weighting. Specifically, given $x_2=(\frac{1}{2},0,\frac{1}{4},\frac{1}{4})$ and $x_3=(0,1,0,0)$, we have $x^*_3=(\frac{1}{2},0,0,\frac{1}{4})/(1-\frac{1}{4})=(\frac{2}{3},0,0,\frac{1}{3})$.

The verification of equivalence in both examples can be easily achieved by applying Algorithm \ref{alg:proc}. Additionally, Theorem \ref{teo:delegation} establishes that delegating an agent's entire voting weight to multiple agents who, in turn, delegate their entire voting weight further, is equivalent to changing the delegation weighting through a convex combination of their profiles. This convex combination is obtained by utilizing either the modified profiles as shown in Figure \ref{fig:example1}, or the unmodified profiles as illustrated in Figure \ref{fig:example0}, as stipulated by equation \eqref{eq:delegation}.

\begin{figure}[ht]
\begin{center}
	\begin{tikzpicture}[->, >=stealth', auto, semithick, node distance=1cm]
	\tikzstyle{every state}=[fill=white,draw=black,thick,text=black,scale=0.7]
	\node[state]    (A) at (0,2)    {$1$};
	\node[state]    (B) at (1,0.5)    {$2$};
	\node[state]    (C) at (2,2)    {$3$};
	\node[state]    (C2) at (2.5,0.5)    {$4$};
        \node[state]    (D) at (4,2)    {$1$};
	\node[state]    (E) at (5,0.5)    {$2$};
	\node[state]    (F) at (6,2)    {$3$};
	\node[state]    (F2) at (6.5,0.5)    {$4$};
	\path
	(A) edge[loop left]			node{$1$}	(A)
	(B) edge[below]	node{$\frac{1}{2}$}	(A)
	edge[left,below]	node{$\frac{1}{2}$}	(C2)
	(C) edge[left,below]	node{$1$}	(B)
	(C2) edge[loop right]	node{$1$}	(C2)
	(D) edge[loop left]			node{$1$}	(D)
	(E) edge[below]	node{$\frac{1}{2}$}	(D)
	    edge[left,below]		node{$\frac{1}{2}$}	(F2)
	(F) edge[above]	node{$\frac{1}{2}$}	(D)
	    edge[above]	node{$\frac{1}{2}$}	(F2)
	(F2) edge[loop right]	node{$1$}	(F2);
	%\node[above=0.5cm] (A){Patch G};
	%\draw[red] ($(D)+(-1.5,0)$) ellipse (2cm and 3.5cm)node[yshift=3cm]{Patch H};
	\end{tikzpicture}
\end{center}
\caption{According to Theorem \ref{teo:delegation}, the delegation configurations depicted above are equivalent. It can be observed that the act of delegating all the voting weight to agent $2$ is equivalent to replicating her delegation weighting.}
\label{fig:example0}
\end{figure}
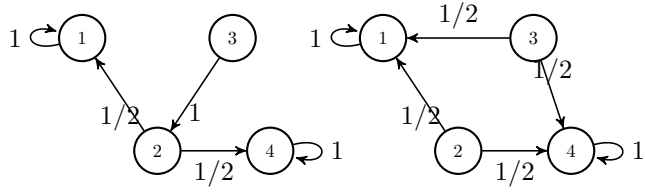

\begin{figure}[ht]
\begin{center}
	\begin{tikzpicture}[->, >=stealth', auto, semithick, node distance=1cm]
	\tikzstyle{every state}=[fill=white,draw=black,thick,text=black,scale=0.7]
	\node[state]    (A) at (0,2)    {$1$};
	\node[state]    (B) at (1,0.5)    {$2$};
	\node[state]    (C) at (2,2)    {$3$};
	\node[state]    (C2) at (2.5,0.5)    {$4$};
    \node[state]    (D) at (4,2)    {$1$};
	\node[state]    (E) at (5,0.5)    {$2$};
	\node[state]    (F) at (6,2)    {$3$};
	\node[state]    (F2) at (6.5,0.5)    {$4$};
	\path
	(A) edge[loop left]			node{$1$}	(A)
	(B) edge[below]	node{$\frac{1}{2}$}	(A)
	edge[bend left,above]	node{$\frac{1}{4}$}	(C)
	edge[left,below]	node{$\frac{1}{4}$}	(C2)
	(C) edge[bend left,below]	node{$1$}	(B)
	(C2) edge[loop right]	node{$1$}	(C2)
	(D) edge[loop left]			node{$1$}	(D)
	(E) edge[below]	node{$\frac{1}{2}$}	(D)
	    edge[left,below]		node{$\frac{1}{4}$}	(F)
	    edge[left,below]		node{$\frac{1}{4}$}	(F2)
	(F) edge[above]	node{$\frac{2}{3}$}	(D)
	    edge[above]	node{$\frac{1}{3}$}	(F2)
	(F2) edge[loop right]	node{$1$}	(F2);
	%\node[above=0.5cm] (A){Patch G};
	%\draw[red] ($(D)+(-1.5,0)$) ellipse (2cm and 3.5cm)node[yshift=3cm]{Patch H};
	\end{tikzpicture}
\end{center}
\caption{According to Theorem \ref{teo:delegation}, the delegation setting on the left is equivalent to the one on the right. That is, when agent $3$ delegates the voting weight to agent $2$, the effect is the same as copying her delegation weighting and redistributing the voting weight that agent $2$ delegates back to $3$ among her proxies in proportion to their respective weights.}
\label{fig:example1}
\end{figure}
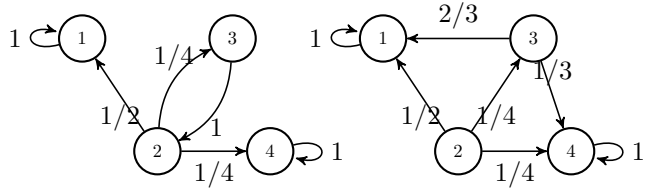

The definition of the voting measure $V^\eps$ includes a penalty on the length of delegation chains, which causes these measures to violate the delegation property. We observe that replicating an agent's delegation weighting results in a configuration with shorter delegation chains, which in turn applies less penalty and produces a different distribution of voting weights. Nevertheless, we demonstrate that $V^\eps$ conforms to the delegation property within a certain margin of error that relies on the penalty $\eps$. Specifically, a reduction in the penalty leads to a decrease in the error. We establish and verify this property, denoted as the $\delta-$Delegation property, in the following theorem.
\begin{theorem}[$\delta-$Delegation property]\label{teo:delta_delegation}
  Assuming the identical hypotheses as stipulated in Theorem \ref{teo:delegation}, and $f \in \R^{n + 1}_{\geq 0}$. Given $\delta>0$, there exists a positive $\eps_{\delta}$ such that if $\eps<\eps_\delta$ then $\|V^\eps(P,f) - V^\eps(P^*,f)\|_{\infty} < \delta$.
\end{theorem}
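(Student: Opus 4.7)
The plan is to reduce the $\delta$-delegation property for $V^\eps$ to the exact delegation property already proved for $V$, by inserting the limits $V(P,f)$ and $V(P^*,f)$ and exploiting a triangle inequality. Since Theorem \ref{teo:delegation} gives $V(P,f)=V(P^*,f)$ exactly, and the definition \eqref{eq:VPf} together with Theorem \ref{teo:limit_f} guarantees that $V^\eps(P,f)\to V(P,f)$ and $V^\eps(P^*,f)\to V(P^*,f)$ as $\eps\to 0$, there should be no genuine obstacle beyond matching constants.

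Concretely, I would first fix $\delta>0$ and write
\begin{align*}
\|V^\eps(P,f)-V^\eps(P^*,f)\|_\infty &\le \|V^\eps(P,f)-V(P,f)\|_\infty \\
&\quad+\|V(P,f)-V(P^*,f)\|_\infty \\
&\quad+\|V(P^*,f)-V^\eps(P^*,f)\|_\infty.
\end{align*}
The hypotheses of Theorem \ref{teo:delegation} hold for both $(P,f)$ and $(P^*,f)$, so the middle term is identically zero. For the remaining two terms, I would invoke Theorem \ref{teo:limit_f} applied separately to $P$ and to $P^*$: it supplies $\eps_1>0$ such that $\|V^\eps(P,f)-V(P,f)\|_\infty<\delta/2$ whenever $\eps<\eps_1$, and $\eps_2>0$ such that $\|V^\eps(P^*,f)-V(P^*,f)\|_\infty<\delta/2$ whenever $\eps<\eps_2$. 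Setting $\eps_\delta:=\min(\eps_1,\eps_2)$ then yields the claim.

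If one wants to avoid appealing to the limit as a black box, one can instead reread the proof of Theorem \ref{teo:limit} and use the explicit block decomposition \eqref{eq:numeric}--\eqref{eq:numeric2} to track, for each of $P$ and $P^*$, the dependence of $V^\eps$ on $\eps$; the dominant contributions to $V^\eps-V$ come from the $\eps A_\eps^{-1}\bm 1$ block and are controlled by Lemma \ref{lemma:sequences}, which delivers the required rate of convergence. This would make $\eps_\delta$ somewhat more explicit in terms of the spectral gap of the sub-stochastic block $P_1$ (and its analogue $P_1^*$) and the cycle structure of $P_3$, but no new ingredient beyond what was developed for Theorem \ref{teo:limit} is needed.

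The only subtle point to check is that the hypotheses of Theorem \ref{teo:delegation} indeed transfer to $P^*$, i.e.\ that after replacing $x_k$ by $x_k^*$ the system is still well-posed and $V(P^*,f)$ is defined by the same limiting procedure; this was essentially verified inside the proof of Theorem \ref{teo:delegation} (in the argument showing $N_3^*=N_3$, so $k\notin N_3^*$ and no new cycles are created), so it can simply be cited. The main potential obstacle would have been if the convergence $V^\eps\to V$ failed to be uniform over the finite-dimensional data $(P,P^*,f)$, but since we are dealing with a pointwise statement for a single fixed configuration, this is not an issue, and the triangle-inequality argument above suffices.
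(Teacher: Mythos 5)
Your argument is correct as a proof of the theorem as literally stated, but it follows a genuinely different route from the paper. You insert the limits and use the triangle inequality: $V^\eps(P,f)\to V(P,f)$ and $V^\eps(P^*,f)\to V(P^*,f)$ by Theorem \ref{teo:limit_f}, and the middle term vanishes by Theorem \ref{teo:delegation}; since $P$, $P^*$ and $f$ are fixed, pointwise convergence suffices and there is no circularity. The paper instead never passes through the limit $V$: it perturbs the linear systems directly, performing the same column operations as in the proof of Theorem \ref{teo:delegation} on $A=(I-\Pt^\eps)$, setting $\Delta:=A'-A^*$, and bounding the single nonzero column $\Delta_k$ entrywise by $\eps/(1-\sum_{i\in D}x_{ki}x_{ik})$, which together with the conservation and positivity results (this is where the extra hypothesis $f\in\R^{n+1}_{\geq 0}$ is actually used) yields the explicit linear bound $\|V^\eps(P,f)-V^\eps(P^*,f)\|_\infty\leq C(P,f,n,k)\,\eps$ and hence $\eps_\delta=\delta/C$. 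The trade-off is clear: your proof is shorter, softer, and does not even need $f\geq 0$, but it produces no rate and no explicit $\eps_\delta$; the paper's constant \eqref{eq:deltadel_constant} is what feeds the remark that follows the theorem, where the dependence of $\eps_\delta$ on $k$ and on $P$ is analyzed. Your suggested ``explicit'' variant via the proof of Theorem \ref{teo:limit} would recover a rate, but one governed by the spectral gap of the sub-stochastic block $P_1$ and the cycle structure of $P_3$ for each of $P$ and $P^*$ separately, which is both messier and potentially much worse than the clean $O(\eps)$ bound the paper obtains by estimating the difference directly.
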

\begin{proof}
Define
$$A := (I-\Pt^\eps),$$
$$A^* := (I-\Pt^{*\eps}),$$
$$u := A^{-1}f,$$
and
$$u^* := A^{*-1}f.$$
  Operating with the columns of the matrix $A$ and making the appropriate change of variables as in the proof of Theorem \ref{teo:delegation}, we introduce a new variable, denoted by $u' \in \R^{n+1}$, which is defined as follows:
\begin{equation}\label{eq:changev_delta}
\left\lbrace
\begin{aligned}
u'_i & = u_i - (1-\eps)x_{ki}u_k  &\mbox{ for all } i \in D, \\
u'_k & = u_k (1 - \sum_{i \in D} (1-\eps)^2x_{ki}x_{ik}),  \\
u'_i & = u_i &\mbox{ for all } i \not \in D \cup \{k\},
\end{aligned}
\right.
\end{equation}
and fulfills the equation $A'u' = f$, with
\begin{equation*}
     A' := \Big(A_1 \Big|...\Big|A_{k-1}\Big| \frac{A_k+\sum_{i \in D'} (1-\eps)x_{ki}A_i}{(1 - \sum_{i \in D'} (1-\eps)^2x_{ki}x_{ik})} \Big| A_{k+1} \Big|...\Big| A_{n+1} \Big).
\end{equation*}
Let us define $\Delta := A' - A^*$. From this definition, it follows that $A^* u' = f - \Delta u'$. By combining this expression with the definition of $u^*$, we arrive at:
\begin{equation}\label{eq:deltadel1}
    A^*(u^* - u') = \Delta u'.
\end{equation}
 By invoking the definition of $A^*$ and following a similar approach as carried out in equations \eqref{eq:exp_delta1}, \eqref{eq:exp_delta2}, and \eqref{eq:exp_delta3}, and denoting $\Delta_i$ as the $i$-th column of $\Delta$, it can be established that $\Delta_i = \bm{0}$ for all $i \not = k$ and
\begin{equation}\label{eq:Delta_column}
    \Delta_k =  \begin{pmatrix} 
        0 \\
        (1-\eps)\frac{\sum_{i \in D} x_{ki}x^*_{i2}}{  1 - \sum_{i \in D} x_{ki}x_{ik} } - (1-\eps)^2 \frac{\sum_{i \in D}  x_{ki}x^*_{i2}}{  1 - (1-\eps)^2 \sum_{i \in D} 
        x_{ki}x_{ik} } \\
        \vdots \\
        (1-\eps)\frac{\sum_{i \in D} x_{ki}x^*_{in}}{  1 - \sum_{i \in D} x_{ki}x_{ik} } - (1-\eps)^2 \frac{\sum_{i \in D}  x_{ki}x^*_{in}}{  1 - (1-\eps)^2\sum_{i \in D}  
        x_{ki}x_{ik} }\\
	 (1-\eps)\eps  \\
\end{pmatrix}, 
\end{equation}
from we have $\Delta u' = u'_k\Delta_k$. Upon utilizing Theorem \ref{teo:conservation} and considering the constraint that $f \in \R_{ \geq 0}$, a direct deduction leads to the following estimate:
\begin{equation}\label{eq:deltadel2}
    0 \leq u'_k \leq \sum_i f_i.
\end{equation}
On the other hand, it is possible to verify that $\Delta_k \in \R^n_{\geq 0}$ and $\Delta_k \to \bm{0}$ as $\eps \to 0$. This is evident from \eqref{eq:Delta_column}, where it can be observed that $\Delta_{k1} = 0$, $\Delta_{kn+1} \geq 0$, and $\Delta_{kn+1} \to 0$ as $\eps \to 0$. It can also be easily established that $\Delta_{kj} \geq 0$ for all $2\leq j \leq n+1$, by making use of \eqref{eq:Delta_column} and the facts that $\sum_{i \in D} x_{ki}x^*_{ij} \geq 0$ and $\sum_{i \in D} x_{ki}x_{ik} \geq 0$. Consequently, we can derive an estimate as follows:
\begin{equation}\label{eq:deltadel3}
    0 \leq \Delta_{kj} = (1-\eps)\frac{\sum_{i \in D} x_{ki}x^*_{ij}}{  1 - \sum_{i \in D} x_{ki}x_{ik} } - (1-\eps)^2 \frac{\sum_{i \in D}  x_{ki}x^*_{ij}}{  1 - (1-\eps)^2 \sum_{i \in D} 
        x_{ki}x_{ik} } \leq 
\end{equation}
$$ (1-\eps)\frac{\sum_{i \in D} x_{ki}x^*_{ij}}{  1 - \sum_{i \in D} x_{ki}x_{ik} } - (1-\eps)^2 \frac{\sum_{i \in D}  x_{ki}x^*_{ij}}{  1 - \sum_{i \in D} 
        x_{ki}x_{ik} } = \eps(1-\eps)\frac{\sum_{i \in D} x_{ki}x^*_{ij}}{  1 - \sum_{i \in D} x_{ki}x_{ik} } \leq $$ 
$$ \frac{\eps}{  1 - \sum_{i \in D} x_{ki}x_{ik} }.$$      
Now, by applying Theorem \ref{teo:conservation} to \eqref{eq:deltadel1} and using the estimates \eqref{eq:deltadel2} and \eqref{eq:deltadel3}, we obtain:
$$0 \leq (u^* - u')_i \leq \sum_{1\leq j \leq n+1} u'_k\Delta_{kj} \leq \Big( \sum_{1\leq j \leq n+1} f_j \Big) \Big( \sum_{1\leq j \leq n+1} \Delta_{kj} \Big) \leq \eps \frac{(n+1)\sum_{1\leq j \leq n+1} f_j}{ 1 - \sum_{j \in D} x_{kj}x_{jk}}.$$
Defining the constant
\begin{equation}\label{eq:deltadel_constant}
    C(P,f,n,k) := \frac{(n+1)\sum_{1\leq j \leq n+1} f_j}{ 1 - \sum_{j \in D} x_{kj}x_{jk}},
\end{equation}
  and upon observing that \eqref{eq:changev_delta} entails $u'_i = u_i$ for all $i \not \in D \cup \{k\}$, one can establish that $\|V^\eps(P,f) - V^\eps(P^*,f)\|_\infty < C(P,f,n,k)\eps$. Taking $\eps_\delta = \delta/C$ we conclude the proof. 
\end{proof}

\begin{remark}
Assuming fixed parameters $n$ and $f \in \R^{n+1}$, the dependence of $\eps_\delta$ on $P$ and $k$ is apparent from \eqref{eq:deltadel_constant}. However, the dependence on $k$ can be eliminated as follows: given $P \in \calP^n$, let $K_{P} \subset N$ be the set of agents that delegate all their voting weight and do not belong to a delegation cycle (that is, satisfying the conditions required for agent $k$ as stated in Definition \ref{def:P2}). We define
$$C'(P) := \max_{k \in K_P} C(P,f,n,k) =  \frac{(n+1)\sum_{1\leq j \leq n+1} f_j}{\min_{k \in K_P} 1 - \sum_{j \in D_k} x_{kj}x_{jk}},$$
and set $\eps_\delta = \delta/C'$. On the other hand, the dependence on $P$ cannot be easily disregarded since the constant $C'$ deteriorates as $\min_{k \in K_P} 1 - \sum_{j \in D_k} x_{kj}x_{jk}$ approaches zero. However, it can be deduced that, for a fixed value of $\delta$, taking $\eps < \eps_\delta$ implies that $\|V^\eps(P,f) - V^\eps(P^*,f)\|_{\infty} < \delta$ holds for all $P \in \{ P \in \calP^n \text{ such that } C'(P) > \eps/\delta\}$.  
\end{remark}

\section{Equilibrium Analysis}\label{sec:nash}

\subsection{The Delegation Game}
In this paper, the decision-making process is examined as a continuous dynamic over time, rather than as a single event, as previously stated. The actions taken by agents, whether it be delegating voting weight or making use of it, have an impact on other agents. As a result, feedback is received by agents, who then adjust their delegation weightings and actions accordingly. This raises the question of whether or not equilibrium states exist for this type of mechanics.

To partially model this dynamic, we introduce the game $\mathcal{G}(N,U,W,\eps)$, wherein $N = \{1,...,n\}$ represents the set of agents, and $W = \{w_1,...,w_n\}$, with $w_i \in \R^{n+1}$ denotes the set of preferences. Here, $w_i \in W$ assigns to each agent a weight, which signifies the preferences of agent $i$. Specifically, $w_{ij}$ reflects the level of satisfaction of agent $i$ with the consumption of her voting weight by $j$ (note that $w_i$ may contain negative entries). The strategy of each agent $i$ is given by its delegation weighting $x_i \in X$. The goal of each agent is to maximize its utility function $U_i: X \to \R$, given by
\begin{equation}\label{eq:utility}
    U_i(x,x_{-i}):= \sum^{n+1}_{j=1} w_{ij} V_j^{\eps}( (x|x_{-i}) , \bm{\delta}_i) = w_i \cdot V^{\eps}( (x|x_{-i}) , \bm{\delta}_i).
\end{equation}
Here $ \bm{\delta}_i \in \R^{n+1}$ is such that $ \bm{\delta}_{ii} = 1$ and $\bm{\delta}_{ik} = 0$ if $i \not= k$, and $(x|x_{-i})$ denotes the matrix $(x_1,...,x,...,x_n)$ with column $x$ in the $i$-th place, where $x_k$ is the delegation weighting of agent $k$ with $k \not= i$. We take $0<\eps<1$. As illustrated in Section \ref{sec:particles}, $V_j^{\eps}( P , \bm{\delta}_i) \in \R$ quantifies the amount of voting weight consumed by agent $j$ that comes from agent $i$. Hence, $U_i$ can be interpreted as the sum of the voting weight delegated by agent $i$ to other agents, weighted in accordance with her preferences.

To illustrate, consider the example where $N = \{1,2,3\}$, $w_1 = (-2,1,-3,-1)$, $w_2 = (1,-1,0,-2)$, and $w_3 = (0,0,1,0)$, with $\eps = 0$. In this scenario, agent 1 prefers agent 2 over herself, agent 2 favors agents 1 and 3 over herself, and agent 3 prefers herself over the other options. Additionally, agent 1 would rather have her vote reach the null or indecision state than delegate it to agent 3. Given this setup, a reasonable initial strategy for the agents is to delegate all their voting weight to the agent they most prefer. Consequently, agent 1 delegates to agent 2, agent 2 delegates to agent 1, and agent 3 retains all her voting weight. Let $P$ denote the delegation matrix corresponding to this configuration; the utility functions can then be computed as follows:

\begin{equation}
\begin{aligned}
U_1(P) & = w_1 \cdot V(P,\bm{\delta}_1)  = & (-2,1,-3,-1) \cdot (0,0,0,1) = & -1,\\
U_2(P) & = w_2 \cdot V(P,\bm{\delta}_2)  = & (1,-1,0,-2) \cdot (0,0,0,1) = & -2,\\
U_3(P) & = w_3 \cdot V(P,\bm{\delta}_3)  = & (0,0,1,0) \cdot (0,0,1,0) = & 1.
\end{aligned}
\end{equation}
It is evident that this configuration does not constitute a Nash equilibrium. Specifically, agent 2 has a better strategy, which involves delegating her vote to agent 3. Consequently, under this revised strategy by agent 2, agent 1's voting weight would be transferred to agent 3, leading agent 1 to adopt a strategy where she retains all her voting weight. A more thorough analysis of this game shows that no Nash equilibrium exists for this set of preferences. By examining a similar but simpler example, we can rigorously demonstrate that the game $\mathcal{G}(N,U,W,0)$ generally lacks equilibrium states.

\begin{lemma}\label{sec:app_V_not_stable}
    The game $\mathcal{G}(N,U,W,0)$ does not necessarily have a Nash equilibrium. 
\end{lemma}
\begin{proof} Consider the example: $N = \{1,2\}$, $w_1 = (0,-1,1)$, and $w_2 = (1,0,-1)$. Indeed, it can be easily verified that no classic delegation strategy pair constitutes an equilibrium state.
On the other hand, considering fractional strategies $x_1 = (1-\eta_1,\eta_1)$ and $x_2 = (1-\eta_2,\eta_2)$, we examine the following cases:
\begin{itemize}
    \item If $0<\eta_1<1$ and $0<\eta_2<1$, it cannot be an equilibrium as $\eta_2=0$ dominates for agent 2. 
    \item Similarly, for $0<\eta_1<1$ and $\eta_2 = 0$, $\eta_1 = 1$ dominates for agent 1. \item If $\eta_2 = 1$, then $\eta_1=0$ dominates for agent 1. 
\end{itemize}
Therefore, no equilibrium exists for $0<\eta_1<1$. Now, let us consider the cases $\eta_1 = 0$ and $\eta_1 = 1$.

\begin{itemize}
    \item  For $\eta_1 = 0$, agent 2's dominant strategy is $\eta_2 = 0$, yet agent 1's dominant strategy is $\eta_1 = 1$. 
    \item With $\eta_1 = 1$, any $\eta_2>0$ strategy for agent 2 is optimal. However, with $\eta_2>0$, agent 1's dominant strategy is $\eta_1 = 0$, indicating it cannot be an equilibrium state.
\end{itemize}
\end{proof}

\subsection{Equilibrium with a positive penalty factor}
Through conventional arguments, one can establish that the game $\mathcal{G}(N,U,W,\eps)$ always possesses at least one Nash equilibrium with $\eps > 0$. To achieve this, we begin by proving the following lemma.

\begin{lemma}\label{lem:upositive}
Let $N = \{1,...,n\}$ be a set of agents with delegation weightings $x_1,...,x_n \in X$, $P$ their delegation matrix, and $f \in \R^{n+1}_{\geq 0}$ in such a way that $f_j > 0$ for some $j \in N$. Then $u^{\eps}_j > 0$ for all $\eps > 0$.    
\end{lemma}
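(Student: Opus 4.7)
The plan is to read off positivity of $u^\eps_j$ directly from the Neumann series representation of $(I - \tilde P^\eps)^{-1}$ that was established in the proof of Theorem \ref{teo:limit}. The key observation is that $\tilde P^\eps$ has non-negative entries and $f \geq 0$, so every term of the series is a non-negative vector; the $k=0$ term alone already forces $u^\eps_j \geq f_j > 0$.

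First, I would recall from the proof of Theorem \ref{teo:limit} that with $A_\eps := I - (1-\eps)\tilde P$, the block-inversion formula gives
$$(I - \tilde P^\eps)^{-1} = \begin{pmatrix} A_\eps^{-1} & \mathbf{0} \\ \boldsymbol{\eps} A_\eps^{-1} & 1 \end{pmatrix}, \qquad A_\eps^{-1} = \sum_{k \geq 0} (1-\eps)^k \tilde P^k,$$
where the series converges because $\eps > 0$ implies $\|(1-\eps)\tilde P\| < 1$. Setting $\bar f := (f_1, \ldots, f_n)^T \in \R^n$ and using that $j \in N$, the definition $u^\eps = (I - \tilde P^\eps)^{-1} f$ yields
$$u^\eps_j = [A_\eps^{-1} \bar f]_j = \sum_{k \geq 0} (1-\eps)^k [\tilde P^k \bar f]_j.$$

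Next, I would note that since $P$ is stochastic, the matrix $\tilde P = P - \mathrm{diag}(P_{11},\ldots,P_{nn})$ has non-negative entries, so every power $\tilde P^k$ is entrywise non-negative; combined with $\bar f \in \R^n_{\geq 0}$, this makes every summand above non-negative. Isolating the $k=0$ term then gives
$$u^\eps_j \geq [\tilde P^0 \bar f]_j = f_j > 0,$$
which is the desired conclusion.

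I do not anticipate any real obstacle: the lemma is essentially a positivity bookkeeping statement about the Neumann series, and the only ingredient beyond elementary non-negativity is the identity $A_\eps^{-1} = \sum_k (1-\eps)^k \tilde P^k$, which is already furnished by Theorem \ref{teo:limit}.
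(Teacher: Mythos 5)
Your proof is correct. It reaches the same essential conclusion as the paper --- $u^\eps_j \geq f_j > 0$ because everything added to $f_j$ is non-negative --- but by a slightly different mechanism. The paper argues row-wise: it invokes the earlier non-negativity result (Lemma \ref{Lem:upositive}, which shows $u^\eps \geq 0$ whenever $f \geq 0$) and then reads off from the $j$-th equation of $(I-\Pt^\eps)u^\eps = f$ that $u^\eps_j = f_j + (1-\eps)\sum_{i\neq j}P_{ji}u^\eps_i \geq f_j > 0$. You instead expand $A_\eps^{-1}$ as the Neumann series $\sum_{k\geq 0}(1-\eps)^k \Pt^k$ and isolate the $k=0$ term; since $\Pt$ is entrywise non-negative, this simultaneously establishes the non-negativity of all components of $u^\eps$ and the lower bound $u^\eps_j \geq f_j$, so your argument is self-contained and does not need the auxiliary lemma. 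One minor point of care: the block-inversion formula shows the first $n$ components of $u^\eps$ depend only on $(f_1,\ldots,f_n)$, which you correctly account for by passing to $\bar f$; and since the hypothesis restricts $j \in N$, you rightly do not need the $j = n+1$ case that the paper mentions in passing. Both routes are valid and equally elementary.
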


\begin{proof}

From the definition of $u^{\eps}$ we have: 
\begin{equation}\label{eq:upositive1}
    \begin{pmatrix} 
	I - (1-\eps)\Pt & \mathbf{0} \\
	-\bm{\varepsilon} & 1  \\
	\end{pmatrix} u^{\eps} = f,
\end{equation}
Since $f_i \geq 0$ for all $i$, we know from Lemma \ref{Lem:upositive} that $u^{\eps} \geq 0$. From \eqref{eq:upositive1} we have 
$$u^{\eps}_j - (1-\eps) \sum_{i \not = j} P_{ji} u^{\eps}_{i} = f_j > 0, \quad \text{ if } j<n+1.$$
It follows that $u^{\eps}_j$ is greater than zero for all $j < n+1$. The scenario where $j=n+1$ can be dealt with in a similar manner.
\end{proof}

The subsequent result demonstrates that $\mathcal{G}(N,U,W,\eps)$ possesses at least one Nash equilibrium for all $0<\eps<1$.

\begin{theorem}\label{teo:nash}
Let $N = \{1,...,n\}$ be a set of agents with preferences $W = \{w_1,...,w_n\} \subset \R^{n+1}$, and $f \in \R^{n+1}$. The game $\mathcal{G}(N,U,W,\eps)$ has a Nash equilibrium for all $0<\eps<1$. 
\end{theorem}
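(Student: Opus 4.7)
The plan is to apply a standard fixed-point argument for continuous games (Kakutani's theorem applied to the best-response correspondence, equivalently the Debreu-Fan-Glicksberg theorem). Since each strategy space $X$ is the standard simplex in $\R^n$ (non-empty, compact, convex), it suffices to verify that, for each $i$, $U_i$ is continuous on $X^n$ and quasi-concave in $x$ for every fixed $x_{-i}$. Continuity of $U_i$ reduces to continuity of $V^\eps(P,\bm{\delta}_i)$ in $P$: as shown in the proof of Theorem \ref{teo:limit}, for $\eps\in(0,1)$ the inverse $(I-\tilde P^\eps)^{-1}$ is given by a uniformly convergent Neumann series in $(1-\eps)\tilde P$, and hence depends continuously on $P$.

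The heart of the argument is quasi-concavity. I claim that $U_i(\cdot,x_{-i})$ is a linear fractional function of $x$: a ratio of two affine functions with a positive denominator. When $x_{-i}$ is held fixed and only $x$ varies, solely the $i$-th column of $\tilde P^\eps$ changes, and it changes affinely in $x$ (with entries $-(1-\eps)x_j$ in rows $j\ne i$ with $j\le n$, and constants in rows $i$ and $n+1$). Writing $M(x):=I-\tilde P^\eps(x)$, expansion of $\det M(x)$ along its $i$-th column shows that $L(x):=\det M(x)$ is affine in $x$. By Cramer's rule applied to $M(x)u^\eps=\bm{\delta}_i$, each entry $u^\eps_k(x)=\det M^{(k)}(x)/L(x)$, where $M^{(k)}(x)$ denotes $M(x)$ with its $k$-th column replaced by $\bm{\delta}_i$; this numerator is affine in $x$ when $k\ne i$, and constant when $k=i$ (column $i$ having been replaced). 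Multiplying by the diagonal factor $P^\eps_{kk}$ — constant for $k\ne i$, and equal to the affine function $(1-\eps)x_i$ for $k=i$ — every coordinate of $V^\eps((x|x_{-i}),\bm{\delta}_i)$ takes the form $\alpha_k(x)/L(x)$ with $\alpha_k$ affine. Summing with weights $w_{ij}$,
\[
U_i(x,x_{-i})=\frac{N(x)}{L(x)},
\]
with $N$ and $L$ both affine. Positivity of $L$ on $X$ follows from the block structure used in the proof of Theorem \ref{teo:limit}: $\det M(x)=\det(I-(1-\eps)\tilde P(x))$, and $I-(1-\eps)\tilde P(x)$ is a nonsingular $M$-matrix (its inverse is a nonnegative Neumann series), whose determinant is strictly positive.

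A ratio of two affine functions with positive denominator is quasi-concave, since every upper level set $\{x\in X:U_i(x,x_{-i})\ge\gamma\}=\{x\in X:N(x)-\gamma L(x)\ge 0\}$ is convex. Combined with continuity and compactness-convexity of $X$, the cited fixed-point theorem furnishes a pure strategy Nash equilibrium. The main difficulty is the careful bookkeeping behind the linear fractional form — in particular handling the $k=i$ term, where the diagonal factor $P^\eps_{ii}=(1-\eps)x_i$ itself contributes a factor of $x$ that combines with the otherwise constant numerator of $u^\eps_i$ to produce an affine contribution consistent with the global linear fractional form.
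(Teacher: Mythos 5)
Your proposal is correct, and while it shares the paper's top-level skeleton (Kakutani applied to the best-response correspondence, with continuity and Berge handling nonemptiness and upper hemicontinuity), it establishes the crucial convexity step by a genuinely different route. The paper does not prove quasi-concavity of $U_i$; instead it takes two maximizers $x,x'$, uses the conservation law (Theorem \ref{teo:conservation}) to shift $w_i$ by a constant vector so that the $i$-th weight vanishes, derives from the resolvent identity $u(t)=A_0^{-1}\bm{\delta}_i+tA_0^{-1}\Delta u(t)$ that the normalized utility along the segment has the form $a+t\,u_i(t)\,b$, and then invokes positivity of $u_i$ (Lemma \ref{lem:upositive}) to conclude $b=0$, so the utility is constant on the segment and the argmax set is convex. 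Your Cramer's-rule computation instead exhibits $U_i(\cdot,x_{-i})$ globally as a ratio $N(x)/L(x)$ of affine functions with $L>0$ on $X$ (the sign of $L=\det(I-\tilde P^\eps)$ following from the nonsingular $M$-matrix structure), whence quasi-concavity — indeed quasi-linearity — and convex upper level sets. This is a strictly stronger structural statement: it shows the paper's observed constancy of $U_i$ between maximizers as an immediate consequence (an affine function $N-\gamma L$ vanishing at both endpoints vanishes on the segment), and it additionally implies that best responses always include a vertex of the simplex, i.e.\ a non-fractional delegation. What the paper's argument buys in exchange is independence from determinants and a reuse of the conservation and positivity lemmas already established for $V^\eps$. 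The only point worth tightening in your write-up is the justification that the Neumann series for $(I-\tilde P^\eps)^{-1}$ converges (uniformly in $P$): some columns of $\tilde P^\eps$ can sum to $1$, so one should argue via the spectral radius, e.g.\ by noting that every column $j\le n$ places mass $\eps$ on the absorbing index $n+1$, giving $\|(\tilde P^\eps)^2\|_1\le 1-\eps$; this is the same gloss the paper itself makes and does not affect the validity of your argument.
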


\begin{proof}
For the sake of convenience, we shall treat the elements of $W$ as row vectors, that is to say $W \subset \R^{n \times n+1}$. Let $r_i(x_{-i})$ denote the optimal responses of agent $i$ to the strategies employed by all other agents:
$$r_i(x_{-i}):= \argmax_{x_{i}} U_i(x_{i},x_{-i}).$$
Let $x$ be an element of $\bm{X} := X^{n} = X \times ... \times X$, and let the set valued function $r: \bm{X} \to 2^{\bm{X}}$ be defined as $r(x) := r_1(x_{-1}) \times ... \times r_n(x_{-n})$. It is noteworthy that any stationary point of $r$ constitutes a Nash equilibrium \cite{fudenberg1991game}. Our aim is to demonstrate the existence of a stationary point of $r$. To this end, we will employ Kakutani's fixed point theorem, which ensures the existence of a fixed point given the fulfillment of the following conditions:
\begin{enumerate}
    \item \label{cond:ccne} $\bm{X}$ is compact, convex, and nonempty,
    \item \label{cond:ne}$r(x)$ is nonempty,
    \item \label{cond:uhct}$r(x)$ is upper hemicontinuous,
    \item \label{cond:convexity} $r(x)$ is convex.
\end{enumerate}
Condition \ref{cond:ccne} is easily fulfilled. Conditions \ref{cond:ne} and \ref{cond:uhct} can be verified by exploiting the continuity of the function $U_i$ for all $i$, and for all $0<\eps<1$, combined with Berge's maximum theorem \footnote{The "Maximum Theorem" states that the solution set to a maximization problem varies upper semicontinuously as the constraint set of the problem changes continuously \cite{berge}.}.

In order to establish \ref{cond:convexity}, it suffices to demonstrate that $r_i(x_{-i})$ is convex for all $i$, based on the definition of $r(x)$. Let $x$ and $x' \in r_i(x_{-i})$ be two strategies. Let $\eta := x'-x$ and $x(t) := x + t\eta$. Our objective is to show that $x(t) \in r_i(x_{-i})$ for all $t \in [0,1]$. Without loss of generality and for the sake of simplicity, we assume that $i = 1$. Referring to the definition given in \eqref{eq:VPeps}, if we define $P(t) := (x(t)|x_{-1})$, the utility function $U_1$ can be expressed as:
$$U_1(x(t),x_{-1}) = w_1  D_t u(t),$$
where $D_t := diag \big( (P_{11}^{\eps}(t),...,P_{nn}^{\eps}(t),P_{n+1 n+1}^{\eps}(t)) \big)$, and $u(t) := (I-\Pt^{\eps}(t))^{-1} \bm{\delta}_1$. 
We observe that $u(t)$ satisfies the equation 
$$	\begin{pmatrix} 
	I - (1-\eps)\Pt(t) & \mathbf{0} \\
	-\bm{\varepsilon} & 1  \\
	\end{pmatrix} u(t) = \bm{\delta}_1,$$
with $\Pt(t) = P(t) - diag \big( (P_{11}(t),...,P_{nn}(t) \big)$. Using the definition of $x(t)$, we can rewrite the former expression as
$$	\left[ \begin{pmatrix} 
	I - (1-\eps)\Pt(0) & \mathbf{0} \\
	-\bm{\varepsilon} & 1  \\
	\end{pmatrix} - \begin{pmatrix} 
	(1-\eps)t\etat & \mathbf{0} \\
	\mathbf{0} & 0 \\
	\end{pmatrix} \right]   
	u(t) = \bm{\delta}_1, $$
where $\etat \in \R^n$, with $\etat_i := \eta_i$ for all $i>1$ and $\etat_1 := 0$. Defining
\begin{equation}\label{eq:nash0}
    A_0 := \begin{pmatrix} 
	I - (1-\eps)\Pt(0) & \mathbf{0} \\
	-\bm{\varepsilon} & 1  \\
	\end{pmatrix} \text{ and } \Delta := \begin{pmatrix} 
	(1-\eps)\etat & \mathbf{0} \\
	\mathbf{0} & 0 \\
	\end{pmatrix}, 
\end{equation}
we have $(A_0 - t\Delta)u(t) = \bm{\delta}_1$, and then 
\begin{equation}\label{eq:nash1}
    u(t) =  A^{-1}_0 \bm{\delta}_1 + tA^{-1}_0 \Delta u(t).
\end{equation}
Also, we observe that $D_t = D_0 + tD_{\Delta}$, with 
\begin{equation} \label{eq:D_delta}
D_{\Delta} := \begin{pmatrix} 
	x'_1 - x_1& \mathbf{0} \\
	\mathbf{0} & \mathbf{0} \\
	\end{pmatrix}.    
\end{equation}
On the other hand, let us consider a constant vector $c \in \R^{1 \times n+1}$, with $c_j = C \in \R$ for all $j$, for some given constant $C$ to be chosen latter. If we define $w^* = w_1 + c$, then it can be observed that maximizing (or minimizing) the utility function $w_1 D_t u(t)$ is equivalent to maximizing (or minimizing) the utility function $U^*(x(t),x_{-1}) := w^* D_t u(t)$. In fact, we have $w^* D_t u(t) = w_1 D_t u(t) + c D_t u(t)$, and 
$$c D_t u(t) = c \cdot V^{\eps}( (x(t)|x_{-1})  , \bm{\delta}_1) = C \sum^{n+1}_{j=1} V_j^{\eps}( (x(t)|x_{-1})  , \bm{\delta}_1) = C,$$
where in the last step we use Theorem \ref{teo:conservation}. Then $U^*(x(t),x_{-1}) = U_1(x(t),x_{-1}) + C$, and therefore
\begin{equation}\label{eq:nash4}
    \argmax_{x} U_1(x,x_{-1}) = \argmax_{x} U^*(x,x_{-1}).
\end{equation}

Now, we choose $C = -w_{11}$, so that $w^*_1 = 0$. As a consequence, from the definition of $w^*$ and \eqref{eq:D_delta}, we have $w^* D_{\Delta} = \bm{0}$, thus
$$U^*(x(t),x_{-1}) = w^* D_t u(t) = w^* (D_0 + t D_{\Delta}) u(t) = w^* D_0 u(t),$$
and using \eqref{eq:nash1} we obtain
\begin{equation}\label{eq:nash2}
    U^*(x(t),x_{-1}) = w^* D_0 A^{-1}_0 \bm{\delta}_1 + t w^* D_0 A^{-1}_0 \Delta u(t).
\end{equation}
From the definition of $\Delta$ \eqref{eq:nash0} we have
$$\Delta u(t) = u_1(t) \begin{pmatrix} 
	\etat  \\
	 0  \\
	\end{pmatrix},$$
and we can rewrite \eqref{eq:nash2} as follow
\begin{equation}\label{eq:nash3}
    U^*(x(t),x_{-1}) = w^* D_0 A^{-1}_0 \bm{\delta}_1 + t u_1(t) w^* D_0 A^{-1}_0  \begin{pmatrix} 
	\etat  \\
	 0  \\
	\end{pmatrix}.
\end{equation}
Since $x$ and $x'$ are maximizers of $U_1$, from \eqref{eq:nash4} we know that they are also maximizers of $U^*$, then
\begin{align*}
  U^*(x(0),x_{-1}) &= U^*(x(1),x_{-1}), \\
  w^* D_0 A^{-1}_0 \bm{\delta}_1  &= w^* D_0 A^{-1}_0 \bm{\delta}_1 + u_1(1) w^* D_0 A^{-1}_0 \begin{pmatrix} 
	\etat  \\
	 0  \\
	\end{pmatrix},\\
	0 &= u_1(1) w^* D_0 A^{-1}_0 \begin{pmatrix} 
	\etat  \\
	 0  \\
	\end{pmatrix}.
\end{align*}	
From Lemma \ref{lem:upositive}, we know that $u_1(1)>0$. Consequently, we have 
$$
w^* D_0 A^{-1}_0 \begin{pmatrix} 
	\etat  \\
	 0  \\
	\end{pmatrix} = 0,
$$
which, by using \eqref{eq:nash3}, leads to $U^*(x(t),x_{-1}) = U^*(x(0),x_{-1})$ for all $t \in [0,1]$. Therefore, $x(t)$ maximizes $U^*$, and from \eqref{eq:nash4}, we conclude that $x(t)$ maximizes $U_1$ for all $t \in [0,1]$. Hence, $r_1(x_{-1})$ is a convex set. Similarly, $r_i(x_{-i})$ is a convex set for all $1 \leq i \leq n+1$. This completes the proof of the theorem.
\end{proof}

\begin{remark}\label{rem:reulatiry}
    In the proof of Theorem \ref{teo:nash}, two technical factors can be discerned that facilitate the existence of equilibrium states in the game $\mathcal{G}(N,U,W,\eps)$. The main factor is the continuity of $V^\eps$. Specifically, the function $V^\eps: \calP^n \to \R^{n+1}$, unlike its limiting case $V$, is continuous. This continuity is inherited by the agents' utility functions, allowing the use of classical game theory tools to prove the existence of equilibrium states. The second factor is the inherent structure of the utility function defined in \eqref{eq:utility}, which is leveraged to show that the set of optimal responses for an agent is convex.
\end{remark}

\subsection{Relation with similar problems in the literature} \label{sec:EXISTENCE}

We note that if we restrict agents' strategies to convex, compact, and non-empty spaces, Theorem \ref{teo:nash} can be trivially reproduced. For example, we can consider a social network in which $\calN_i \subseteq N$ denotes the neighbors of the agent $i$, and restrict the strategies of each agent to its neighborhood, that is $x_i \in X_i := \{ x \in \R^{n}_{\geq 0} : x_k=0 \,\,\forall k \not \in \calN_i, \text{ and } \sum_k x_k = 1\}$. For this case the proof of the Theorem \ref{teo:nash} can be easily replicated using $\bm{X} := X_1 \times ... \times X_n$. This setting can be seen as a generalization of problem \textbf{EXISTENCE}, form \cite{game}[Section 3.2].

In \cite{game}, the authors address the problem of identifying delegation configurations that form a Nash equilibrium within the classical LD paradigm, which permits only simple delegations. The utility functions of each agent are based on a ranking system, where the agent lists, in order of preference, those agents they prefer to vote on their behalf. This ranking system is referred to by the authors as a preference profile, representing a discrete analogue of the preference profiles $W$ defined in this section.

Within the framework of our work, the \textbf{EXISTENCE} problem can be precisely described as follows: Let $N = \{1, \ldots, n\}$ be a set of agents with preferences $W = \{w_1, \ldots, w_n\} \subset \R^{n+1}$, where $w_i \neq w_j$ for all $i, j \in N$, and let $f = \bm{1}$. Define the game $\mathcal{G}^c(N, U, W)$ using the utility function defined in \eqref{eq:utility} with $V^c$ in place of $V^\eps$. Note that in this setting, agents' strategies are limited to classical delegations. The \textbf{EXISTENCE} problem is then defined as determining the existence of Nash equilibria in the game $\mathcal{G}^c(N, U, W)$.

The authors present a simple example demonstrating that a Nash equilibrium does not always exist in this game. In contrast, by employing the measure $V^\eps$, one can ensure the existence of equilibrium states.

On the other hand, the aforementioned problem is also discussed in \cite{pirates}. In that work, the equilibrium states with mixed strategies are examined, and the authors establish the bicriteria approximation guarantees of the game with respect to both rationality and social welfare. However, mixed strategies do not serve as a good generalization of the classical model when deployed to assess the voting weight of agents (see Appendix \ref{subsec:apMS}).

\subsection{A stable Liquid Democracy model}

As we mentioned before,  we can demonstrate that employing the measure $V^\eps$ in a liquid democracy system guarantees the existence of equilibria, while ensuring that desirable properties are maintained within a certain error margin that depends on the penalty parameter $\eps > 0$. This observation is formalized in the following result. 

\begin{theorem}\label{teo:main}
Consider a system consisting of agents $N = \{1,2,...,n\}$, where each agent $i \in N$ delegates their vote by using a delegation weighting $x_i \in X$. Let $P \in \mathcal{P}^n$ represent the delegation matrix formed from these weightings. Let $f \in \mathbb{R}_{\geq 0}^{n+1}$, where $f_i$ represents the inherent voting weight of agent $i$. Let $W = \{w_1,w_2,\dots,w_n\} \subset \mathbb{R}^{n+1}$ denote the preferences of these agents over others, as specified in Theorem \ref{teo:nash}. Given $0<\eps<1$, using the voting weight measure $V^{\eps}$ defined in \eqref{eq:VPf} the system has the following properties:

    \begin{itemize}
        \item[(I)] $\delta$-Generalization property.
        Given $\delta > 0$ and $f = \bm{1}$, there exists a positive $\eps_\delta$ such that for any $\eps < \eps_\delta$, $\max_{1\leq i \leq n}|V^\eps_i(P) - V_i^c(P)| < \delta$ for all $P \in \calA^n$, where $V^c$ denotes the classic LD measure as defined in \eqref{eq:classicLD}.
        \item[(II)] $\delta$-Delegation property, as introduced in Theorem \ref{teo:delta_delegation}, which asserts the existence of delegation property $(P2)$ subject to a degree of error that is contingent upon the penalty $\eps$.
        \item[(III)] $\delta$-Self-selection. For all $i \in N$ such that $P_{ii} = 0$, it holds that $V_i^\eps(P,f) = 0$. Moreover, given $\delta>0$, there exists $\eps_\delta>0$ such that for all $\eps<\eps_\delta$ and for all $P\in \calP^n$, we have $(1-\delta) f_iP_{ii} \leq V_i^{\eps}(P,f)$.
        \item[(IV)] Conservation of voting weight. Namely, $\sum^{n+1}_{i=1} V^\eps_i(P,f) = \sum^{n+1}_{i=1} f_i$ for all $P \in \calP^n$. 
        \item[(V)] Existence of equilibrium states. The game $\mathcal{G}(N,U,W,\eps)$ has Nash equilibria. 
    \end{itemize}
\end{theorem}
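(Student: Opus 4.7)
The plan is to prove Theorem \ref{teo:main} essentially as a synthesis theorem: each of (I)--(V) is either identical to, or an immediate consequence of, a result already established in Sections \ref{sec:formal} and \ref{sec:nash}. Concretely, I would cite Theorem \ref{teo:propP1delta} for (I), Theorem \ref{teo:delta_delegation} for (II), Theorem \ref{teo:conservation} for (IV), and Theorem \ref{teo:nash} for (V). In each case the hypotheses match verbatim: (I) is stated for $P \in \calA^n$ with $f = \bm{1}$; (II) and (IV) are stated for $f \in \R^{n+1}$ (with the sign restriction $f \geq 0$ needed in (II)); and for (V) the game $\mathcal G(N,W,\eps)$ is defined using precisely $V^\eps$.

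The only item that does not reduce to a direct citation is the lower bound in (III). Here I would proceed in two steps. First, the implication $P_{ii}=0 \Rightarrow V_i^\eps(P,f)=0$ is exactly the first half of Theorem \ref{teo:propP3}. Second, the second half of Theorem \ref{teo:propP3} gives the pointwise inequality $V_i^\eps(P,f) \geq (1-\eps) P_{ii} f_i$ for every $\eps \in [0,1)$ and every $i \in N$, uniformly in $P \in \calP^n$. Given $\delta > 0$, I would simply set $\eps_\delta := \delta$; then for $\eps < \eps_\delta$ one has $1-\eps > 1-\delta$ and therefore $V_i^\eps(P,f) \geq (1-\delta) f_i P_{ii}$, which is exactly what (III) asks. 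The uniformity in $P$ is free because the bound in Theorem \ref{teo:propP3} has no constant depending on $P$ on the right-hand side.

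The main subtlety, rather than a real obstacle, is that the three $\eps_\delta$ appearing in (I), (II) and (III) are produced by different earlier results and in general carry different dependencies: the one for (I) is uniform over $\calA^n$ (by the nilpotency argument used in the proof of Theorem \ref{teo:propP1delta}), the one for (III) is uniform over $\calP^n$ (as just shown), whereas the one for (II) is only uniform over subsets $\{P \in \calP^n : C'(P) > \eps/\delta\}$ because of the constant $C(P,f,n,k)$ that controls the proof of Theorem \ref{teo:delta_delegation}. Since the theorem statement introduces a separate $\eps_\delta$ inside each clause, no reconciliation is needed in the proof itself; but if one wants a single $\eps$ realising (I)--(III) simultaneously within error $\delta$, one must take the minimum of the three thresholds, and this is the only point I would flag explicitly. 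Beyond that remark, the proof is just a short checklist invoking the previously established statements.
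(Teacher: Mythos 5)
Your proposal matches the paper's proof, which is exactly the same checklist: (I) from Theorem \ref{teo:propP1delta}, (II) from Theorem \ref{teo:delta_delegation}, (III) from Theorem \ref{teo:propP3}, (IV) from Theorem \ref{teo:conservation}, and (V) from Theorem \ref{teo:nash}. Your extra step of taking $\eps_\delta := \delta$ to pass from the bound $(1-\eps)P_{ii}f_i$ to $(1-\delta)P_{ii}f_i$ in (III), and your remark on the differing uniformity of the thresholds, are details the paper leaves implicit but do not change the argument.
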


\begin{proof}
The property (I) may be straightforwardly deduced from Theorem \ref{teo:propP1delta}. On the other hand, the properties (II), (III), and (IV) are readily derivable from theorems \ref{teo:delta_delegation}, \ref{teo:propP3}, and \ref{teo:conservation}, respectively. Finally, property (V) can be inferred from Theorem \ref{teo:nash}. 
\end{proof}

\section{Conclusions}

This paper presents a novel method for computing the voting weight of agents in the liquid democracy paradigm. The proposed method extends the classical model by letting agents to fractionate and delegate their voting weight to multiple agents, allowing for the imposition of a penalty factor on the length of the delegation chains. This generalization is achieved while preserving the most reasonable properties of the classical model. 

We explore the existence of equilibrium states in the proposed liquid democracy system. Specifically, we define a game in which each agent's delegation weighting constitutes their strategy, and the goal of each agent is to maximize a utility function that quantifies their satisfaction regarding their voting weight. By leveraging the voting weight measure outlined in the first part of this paper and imposing a non-zero penalty factor on the length of delegation chains, we prove that the game admits Nash equilibria. Nonzero penalties facilitate the existence of equilibrium states through regularization of the utility function (see Remark \ref{rem:reulatiry}).

Finally, we present Theorem \ref{teo:main}, which summarizes the obtained results of this work. Our results demonstrate that by introducing a penalty on the length of delegation chains that is suitably small, it is possible to construct a liquid democracy system that maintains the main features of the classical model within a prescribed error margin. Furthermore, such a system allows for the existence of Nash equilibria. To the best of our knowledge, this study provides the first formal proof that establishes the feasibility of a liquid democracy system that guarantees the existence of equilibrium states.

\appendix
\section{}

\begin{proof}[Proof. (of Theorem \ref{teo:conservation} )]\label{proof:teo_cons}
We consider $V^{\eps}(P,f) \in \R^{n+1}$, defined as in \eqref{eq:VPfeps}, for some $\eps>0$. Firstly, our aim is to demonstrate that the equality $\sum^{n+1}_{i=1} V^{\eps}_i(P,f) = \sum^{n+1}_{i=1} f_i$ holds for all $\eps > 0$. To this end, using the definitions provided in equations \eqref{eq:VPfeps} and \eqref{eq:Weps}, we can write 
\begin{equation}\label{eq:consl1}
(I - \Pt^{\eps})u^{\eps} = f.
\end{equation}
Let $D \in \R^{n+1\times n+1}$ be the diagonal matrix $D := diag \big( (P_{11}^{\eps},...,P_{nn}^{\eps},P_{n+1 n+1}^{\eps}) \big) $. Since $\Pt^{\eps} = P^{\eps} - D$, from \eqref{eq:consl1} we obtain 
$$(I - \Pt^{\eps})u^{\eps} = u^{\eps} - P^{\eps}u^{\eps} + Du^{\eps} = f.$$
By summing all the equations of the preceding linear system, we obtain
$$\sum^{n+1}_{i=1}u_i^{\eps} - \sum^{n+1}_{i=1} \sum^{n+1}_{j=1} P_{ij}^{\eps}u_j^{\eps} + \sum^{n+1}_{i=1} V^{\eps}_i(P,f) = \sum^{n+1}_{i=1} f_i.$$
Since $P^{\eps}$ is a stochastic matrix, $\sum^{n+1}_{i=1} P_{ij}^{\eps} = 1$ for all $j$, and then
$$\sum^{n+1}_{i=1} \sum^{n+1}_{j=1} P_{ij}^{\eps}u_j^{\eps} =  \sum^{n+1}_{j=1} u_j^{\eps}\sum^{n+1}_{i=1} P_{ij}^{\eps} = 
\sum^{n+1}_{j=1} u_j^{\eps}.$$
By combining this with the previous equality, we obtain:
$$ \sum^{n+1}_{i=1} V^{\eps}_i(P,f) = \sum^{n+1}_{i=1} f_i. $$

Given that $W := \{ w \in \R^{n+1} : \sum^{n+1}_{i=1}w_i = \sum^{n+1}_{i=1} f_i \}$ a closed set and $V^{\eps}(P,f) \in W$ for all $\eps>0$, it follows that $V(P,f) = \lim_{\eps \to 0} V^{\eps}(P,f)$, and therefore, $V(P,f) \in W$. This concludes the proof.

\end{proof}

\begin{proof}[Proof. (of Theorem \ref{teo:propP1} )]\label{proof:teo_propP1}
Let $P \in \mathcal{B}^n$ and $N = \{1,\dots,n\}$ be the set of agents associated with this delegation matrix. From the definitions of $V$ and $V^c$, we observe that $V_i(P) = V^c_i(P) = 0$ for every agent $i$ that belongs to a delegation cycle (according to Definition \ref{def:cycle}). Therefore, without loss of generality, we can assume that the delegation matrix $P$ does not contain cycles. This allows us to omit the elementwise product in the definition of $V^c$ \eqref{eq:classicLD}. Starting from the alternative definition of $V$ given in \eqref{eq:alter_def}, it suffices to prove that
    \begin{equation}\label{eq:equiv}
            \lim_{k\to\infty} P^k \bm{1} = \lim_{k \to \infty} \big( \sum^{k}_{\ell=0} \Pt^\ell \bm{1} \big) \odot diag(P), \quad \forall P \in \calB^n. 
    \end{equation}
To this end, without loss of generality, we can assume that the candidate agents are listed last. In this way, the matrix $P$ can be written as
\begin{equation*} 
    P = \begin{pmatrix} 
	P_1 & \bm{0} \\
	P_2 &  I  \\
\end{pmatrix},
\end{equation*}
with $P_1$ being a sub-stochastic matrix. As in \eqref{eq:block_power}, the powers of $P$ can be computed as
\begin{equation*} 
    P^k = \begin{pmatrix} 
	P_1^k & \bm{0} \\
	B_k & I  \\
\end{pmatrix},
\end{equation*}
with $k \in \N$, and $B_k = \sum^{k-1}_{\ell=0} I^\ell P_2 P_1^{k - \ell -1} = \sum^{k-1}_{\ell=0} P_2 P_1^\ell$. Applying the same idea, we can see that
\begin{equation*} 
    \Pt = \begin{pmatrix} 
	P_1 & \bm{0} \\
	P_2 & \bm{0}   \\
\end{pmatrix},
\end{equation*}
and the powers of this matrix can be computed as
\begin{equation*} 
    \Pt^\ell = \begin{pmatrix} 
	P_1^\ell & \bm{0} \\
	P_2P_1^{\ell-1} & \bm{0}  \\
\end{pmatrix},
\end{equation*}
with $\ell \in \mathbb{N}$. 

Now, we have
\begin{equation}\label{eq:equiv0}
P^k \bm{1} =
\begin{pmatrix} 
	P_1^k & \bm{0} \\
	\sum^{k-1}_{\ell=0} P_2 P_1^\ell & I  \\
\end{pmatrix}\bm{1} =
\begin{pmatrix} 
	P_1^k \bm{1} \\
	(\sum^{k-1}_{\ell=0} P_2 P_1^\ell)\bm{1} + \bm{1}  \\
\end{pmatrix}.
\end{equation}
Also we have
\begin{equation}\label{eq:equiv1}
\sum^{k}_{\ell=0} \Pt^\ell \bm{1} = \bm{1} + \sum^{k}_{\ell=1} \Pt^\ell \bm{1} = \bm{1} + \begin{pmatrix} 
	\sum^{k}_{\ell=1} P_1^\ell & \bm{0} \\
	\sum^{k}_{\ell=1} P_2 P_1^{\ell-1} & \bm{0} \\
\end{pmatrix}\bm{1} = 
\begin{pmatrix} 
	(\sum^{k}_{\ell=0} P_1^\ell) \bm{1} \\
	(\sum^{k-1}_{\ell=0} P_2 P_1^\ell)\bm{1} + \bm{1}  \\
\end{pmatrix}.
\end{equation}
Note that $\lim_{k\to \infty} P_1^k \bm{1} = 0$ since $P_1$ is a sub-stochastic matrix. This, together with \eqref{eq:equiv0}, yields
\begin{equation}\label{eq:equiv2}
\lim_{k \to \infty} P^k \bm{1} =
\begin{pmatrix} 
	\lim_{k\to\infty}P_1^k \bm{1} \\
	\lim_{k\to\infty}(\sum^{k-1}_{\ell=0} P_2 P_1^\ell)\bm{1} + \bm{1}  \\
\end{pmatrix} = 
\begin{pmatrix} 
	\bm{0} \\
	\lim_{k\to\infty}(\sum^{k-1}_{\ell=0} P_2 P_1^\ell)\bm{1} + \bm{1}  \\
\end{pmatrix}.
\end{equation}
We now observe that by listing the candidates last, we have $diag(P) = \begin{pmatrix}
\bm{0} \\
\bm{1} \\
\end{pmatrix}$, combining this with \eqref{eq:equiv1} gives us
\begin{equation}\label{eq:equiv3}
 \lim_{k \to \infty} (\sum^{k}_{\ell=0} \Pt^\ell \bm{1}) \odot diag(P) = 
 \begin{pmatrix} 
	\bm{0} \\
	\lim_{k\to\infty}(\sum^{k-1}_{\ell=0} P_2 P_1^\ell)\bm{1} + \bm{1}  \\
\end{pmatrix}.
\end{equation}
Finally, from \eqref{eq:equiv2} and \eqref{eq:equiv3}, we can deduce \eqref{eq:equiv}.
\end{proof}

\begin{proof}[Proof.( of Theorem \ref{teo:propP1delta} )]\label{proof_teo_P1delta}
We posit the assumption that not all the agents within the system are part of a cycle, otherwise, there would be nothing to demonstrate. By applying Theorem \ref{teo:propP1}, we obtain $|V^\eps_i(P) - V_i^c(P)| = |V^\eps_i(P) - V_i(P)|$ for all $P \in \calA^n$. Therefore, it suffices to establish that the convergence addressed in Theorem \ref{teo:limit} uniformly holds in $\calA^n$.
The proof of this result reveals that the convergence of $V^\eps_i$, where $1\leq i \leq n$, relies on the powers of the matrix $P_1 \in \R^{r \times r}$, $1 \leq r \leq n$, defined in \eqref{eq:numeric}. It is noteworthy that this matrix can be represented as:
\begin{equation*} 
    P_1 = \begin{pmatrix} 
       \bm{0} & B \\
\end{pmatrix},
\end{equation*}
with $B_{ij} \in \{0,1\}$, and $\bm{0} \in \R^{s \times r}$ with $s \geq 1$, since we are considering $P \in \calA^n$. Furthermore, in view of the fact that $P_1$ accounts for interactions among agents who not belong to any delegation cycle, it can be deduced based on the construction of $P_1$ that this matrix is nilpotent. Given that $P_1 \in \R^{r \times r}$ with $r \leq n$, it follows that $\sum_{k\geq0} (1 - \eps)^k P^k_1 \bm{1} = \sum_{0 \leq k \leq n+1} (1 - \eps)^k P^k_1 \bm{1}$. Thus, the uniform convergence of $V^\eps_i(P)$ in $\mathcal{A}^n$ can be readily inferred.  

\end{proof}

\subsection{Spectral LD generalization}\label{app:SG}
We shall denote the generalized Spectral LD measure as the natural extension of the function $V^c$ defined in \eqref{eq:classicLD} to the set $\calP^n$. Specifically, we consider $V^c: \calP^n \to \R_{\geq 0}$,

\begin{equation}\label{eq:classicLDG}
    V^c(P) := \lim_{k\to \infty} ( P^k \mathbf{1} ) \odot diag(P).
\end{equation}

As discussed in Section \ref{sub:problem_mutidel}, the function under consideration does not preserve self-selection (property (P3)). Additionally, it is apparent that this generalization does not fulfill the delegation property (P2). This can be demonstrated by examining the example presented in Figure \ref{fig:example2}.
Considering this, we have
$$
P_l = \begin{pmatrix} 
	  \frac{1}{2} &   \frac{1}{2} &  0 \\
	\frac{1}{2} &    0          &  1 \\
           0        &   \frac{1}{2} &  0 \\
\end{pmatrix},\text{ and }
P_r = \begin{pmatrix} 
	  \frac{1}{2} &   \frac{1}{2} &  1 \\
	  \frac{1}{2} &            0  &  0 \\
                  0 &   \frac{1}{2} &  0 \\
\end{pmatrix},
$$
where the delegation matrix $P_l$ corresponds to the left setting, whereas $P_r$ corresponds to the right setting.
\begin{figure}[ht]
\begin{center}
	\begin{tikzpicture}[->, >=stealth', auto, semithick, node distance=1cm]
	\tikzstyle{every state}=[fill=white,draw=black,thick,text=black,scale=0.7]
	\node[state]    (B) at (-1,0.5)    {$1$};
	\node[state]    (C) at (0.5,2)    {$2$};
	\node[state]    (C2) at (2,0.5)    {$3$};
	\node[state]    (E) at (3.5,0.5)    {$1$};
	\node[state]    (F) at (5,2)    {$2$};
	\node[state]    (F2) at (6.5,0.5)    {$3$};
	\path

        (B) edge[loop left,above]	node{$\frac{1}{2}$}	(B)
	(B) edge[bend left,above]	node{$\frac{1}{2}$}	(C)
	(C2) edge[bend right,above]	node{$1$}	(C)
        (C) edge[bend right, below]	node{$\frac{1}{2}$}	(C2)
        (C) edge[bend left,below]	node{$\frac{1}{2}$}	(B)
        (E) edge[loop left,above]	node{$\frac{1}{2}$}	(E)
	(E) edge[bend left,above]	node{$\frac{1}{2}$}	(F)
	(F2) edge[below]	node{$1$}	(E)
        (F) edge[above]	node{$\frac{1}{2}$}	(F2)
        (F) edge[below]	node{$\frac{1}{2}$}	(E);
	%\node[above=0.5cm] (A){Patch G};
	%\draw[red] ($(D)+(-1.5,0)$) ellipse (2cm and 3.5cm)node[yshift=3cm]{Patch H};
	\end{tikzpicture}
\end{center}
\caption{We consider two settings with $N = \{1,2,3\}$, where the delegation property establishes that the voting weight distribution should be the same for both settings. However, it can be verified that this distribution differs when evaluated in each of the aforementioned settings using the measure $V^c$.}
\label{fig:example2}
\end{figure}
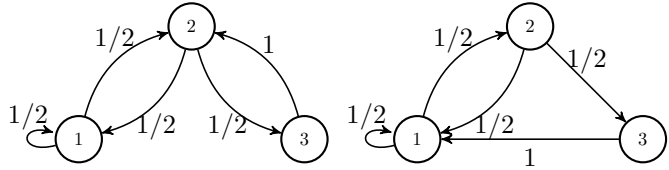
By applying \eqref{eq:classicLDG}, it is readily apparent that $V^c(P_l) = (\frac{9}{16},0,0)$ and $V^c(P_r) = (\frac{13}{16},0,0)$. Note that the computation in this case can be implemented by simply calculating the equilibrium states of the Markov chain defined by the delegation matrices and then multiplying the resulting vector by the diagonal of the matrix.
 Therefore, it can be deduced that $V^c$ fails to satisfy the delegation property. 

Notably, since the voting measures proposed in \cite{castillo,viscousDemocracy,yoshida} are closely related to the generalization described in \eqref{eq:classicLD}, it can be shown, using ideas similar to those outlined above, that the voting measures in these works do not satisfy properties (P2) and/or (P3).

%It is worth noting, however, that $V(P_l) = V(P_r) = (3,0,0,0)$.

\subsection{Mixed strategies as multi-agent delegation}\label{subsec:apMS} As mentioned earlier, in \cite{pirates}, the authors examine a game analogous to the one defined in Section \ref{sec:nash} within the framework of the classic model of Liquid Democracy. There, they establish that the game has bicriteria approximation guarantees concerning both rationality and social welfare when mixed strategies are taken into consideration. However, mixed strategies are not a suitable generalization of the classical model when employed to assess the voting weight of agents.

To see this, we consider a system consisting of agents $N = \{1,...,n\}$, where $x_1,...,x_n \in X$ represent their respective delegation weightings, and $P \in \calP^n$ is the delegation matrix derived from these weightings. Here $\calA^n = \{ A \in \calP^n : A_{ij} \in \{0,1\} \text{ for all } 1 \leq i,j \leq n \}$. We define the function $p: \calA^n \times \calP^n \to [0,1]$.
$$
p(A,P) := \prod^n_{i=1} A^t_i \cdot P_i, 
$$
where, $A_i$ and $P_i$ correspond to the $i$-th column of $A$ and $P$, respectively. It is easy to see that $p(\cdot , P)$ defines a discrete probability density function on the set $\calA^n$. We can therefore define the mixed strategy-based voting weight measure $V^{MS}: \calP^n \to \R^n$ as follows:
\begin{equation}\label{eq:VMS}
    V^{MS}(P) := \sum_{A \in \calA^n} p(A,P)V^C(A).
\end{equation}
Here, $V^C$ represents the voting weight measure of the classic model defined in \eqref{eq:classicLD}.

We can easily observe that the voting weight measure $V^{MS}$ does not satisfy the delegation property (P2). To demonstrate this, we can consider the example illustrated in Figure \ref{fig:example1}, where the delegation matrices are given by:
$$
P_l = \begin{pmatrix} 
	  1 &   \frac{1}{2} &  0 & 0\\
	0 &    0  &  1 & 0\\
        0 &   \frac{1}{4} &  0 & 0\\
        0 &   \frac{1}{4} &  0 & 1\\
\end{pmatrix},\text{ and }
P_r = \begin{pmatrix} 
	  1 &   \frac{1}{2} &  \frac{2}{3} & 0\\
	0 &    0  &  0   & 0\\
        0 &   \frac{1}{4} &  0   & 0\\
        0 &   \frac{1}{4} &  \frac{1}{3} & 1\\
\end{pmatrix},
$$
for the left and right settings, respectively. Formula \eqref{eq:VMS} yields $V^{MS}(P_l) = (2,0,0,\frac{3}{2})$ and $V^{MS}(P_r) = (\frac{7}{3},0,0,\frac{5}{3})$, leading to the conclusion that $V^{MS}$ violates the delegation property.

\bibliographystyle{abbrv}
\bibliography{delegative.bib}

\end{document}